\newcommand{\dcsAlgRef}[1]{Algorithm~\ref{alg:#1}} 
\newcommand{\dcsSecRef}[1]{Section~\ref{sec:#1}} 
\newcommand{\dcsFigRef}[1]{Figure~\ref{fig:#1}} 
\newcommand{\dcsTabRef}[1]{Table~\ref{tbl:#1}} 
\newcommand{\dcsEqRef}[1]{Eq.~\eqref{eq:#1}} 
\newcommand{\dcsHRule}{\rule{\linewidth}{0.08em}}
\newtheorem{proposition}{Proposition}
\newtheorem{definition}{Definition}
\journal{XXX}
\begin{document}

\begin{frontmatter}

\begin{center}
{\Huge Maximizing Profit in Green Cellular Networks through Collaborative Games}
\end{center}

\begin{center}
{\Large\bfseries
Please, cite this paper as:
\begin{tabular}{c}
Cosimo Anglano, Marco Guazzone, Matteo Sereno,\\
\emph{``Maximizing Profit in Green Cellular Networks},\\
\emph{through Collaborative Games,''}\\
Computer Networks, Volume 75, Part A, pp. 260--275, 2014.\\
DOI: \url{10.1016/j.comnet.2014.10.003}\\
{\normalsize Publisher: \url{http://www.sciencedirect.com/science/article/pii/S1389128614003582}}
\end{tabular}
}
\end{center}

\newpage

\title{Maximizing Profit in Green Cellular Networks through Collaborative Games\tnoteref{cit,pub}}
\tnotetext[cite]{Please, cite as: \emph{Cosimo Anglano, Marco Guazzone, Matteo Sereno, ``Maximizing Profit in Green Cellular Networks through Collaborative Games,'' Computer Networks, Volume 75, Part A, Pages 260--275, 2014, DOI:10.1016/j.comnet.2014.10.003.}}
\tnotetext[pub]{Link to publisher: \url{http://www.sciencedirect.com/science/article/pii/S1389128614003582}}

\author[unipmn]{Cosimo Anglano\corref{corauth}}
\ead{cosimo.anglano@di.unipmn.it}

\author[unipmn]{Marco Guazzone}
\ead{marco.guazzone@di.unipmn.it}

\author[unito]{Matteo Sereno}
\ead{matteo.sereno@unito.it}

\address[unipmn]{Department of Science and Technological Innovation, University of Piemonte Orientale, Italy}
\address[unito]{Department of Computer Science, University of Torino, Italy}

\cortext[corauth]{Corresponding author}

\begin{abstract}
In this paper, we deal with the problem of maximizing the profit of
\emph{Network Operators} (NOs) of green cellular networks in situations where
\emph{Quality-of-Service} (QoS) guarantees must be ensured to users, and
\emph{Base Stations} (BSs) can be shared among different operators.

We show that if NOs cooperate among them,
by mutually sharing their users and BSs, then each one of them
can improve its net profit.

By using a game-theoretic framework, we study the problem of forming
\emph{stable} coalitions among NOs.
Furthermore, we propose a mathematical optimization model to allocate users to a
set of BSs, in order to reduce costs and, at the same time, to meet user QoS for
NOs inside the same coalition.
Based on this, we propose an algorithm, based on cooperative game theory,
that enables each operator to decide with whom to cooperate in order to
maximize its profit.

This algorithms adopts a distributed approach in which each NO autonomously
makes its own decisions, and where the best solution arises without
the need to synchronize them or to resort to a trusted third party.

The effectiveness of the proposed algorithm is demonstrated through
a thorough experimental evaluation considering real-world traffic traces,
and a set of realistic scenarios.
The results we obtain indicate that our algorithm allows a population of
NOs to significantly improve their profits thanks to the combination
of energy reduction and satisfaction of QoS requirements.
\end{abstract}

\begin{keyword}
Wireless Networks \sep Profit Maximization \sep Green Networking \sep Cooperative Game Theory \sep Coalition Formation
\end{keyword}

\end{frontmatter}


\section{Introduction} \label{sec:intro}

The increasing consumption of electrical energy is one of the most important
issues characterizing modern society because of its effects on climate changes
and on the depletion of non-renewable sources.
In this scenario, the ICT sector plays a key role, being responsible for
about $10\%$ of the world carbon footprint and electrical energy consumption
\cite{GAP-07,ClimateGroup-2008-Smart2020}.

Reportedly \cite{Vereecken-2011-Power,Oh-2011-Toward},
within the ICT sector, the mobile telecommunication industry (and, in particular, cellular networks)
is one of the major contributors to energy consumption. 
This has stimulated the interest towards a new
research area called \emph{green cellular networks}~\cite{DeDomenico-2014-Enabling},
that aims at reducing the energy consumption of these communication infrastructures.

From the perspective of a \emph{cellular Network Operator} (NO), the reduction
of electrical energy consumption is not only a matter of being ``green'' and
responsible, but also an economically important opportunity. 
As a matter of fact, it has been argued that nearly half of the total
operating expenses of a NO is due to energy costs
\cite{Vereecken-2011-Power,Oh-2011-Toward}.
Furthermore, a significant part of these costs are due to \emph{Base Stations} (BSs) \cite{Auer2011}:
indeed, even in the case of little or no activity, a BS can consume more than
$90\%$ of its peak energy \cite{Oh-2011-Toward,Peng-2011-Traffic}.
Thus, by reducing energy consumption, a NO may sensibly increases its profit. 

Consequently, a lot of research effort has been concentrated lately on the reduction of
the energy consumed by BSs. Techniques like the design of more energy-efficient hardware
equipments, or the use of new energy saving techniques (e.g., \emph{sleep
modes} \cite{Micallef-2010-Cell} and \emph{cell zooming} \cite{Niu-2010-Cell})
to switch off under-utilized BSs during low traffic periods and to transfer the
corresponding load to neighboring cells, have been proposed as possible solutions.

Such techniques, however, must be applied with care so as to maintain
\emph{Quality-of-Service} (QoS) guarantees agreed by a NO with its
customers, whose violations imply monetary losses for that NO.
Specifically, since fewer transmission resources are available at a cell when such 
energy-efficient techniques are used, bottlenecks may form for those users
connected to that cell, who may thus experience QoS levels lower than
guaranteed, and in some cases may be even unable to receive service at all.
Finally, the use of techniques like cell zooming may cause other problems,
such as inter-cell interference and coverage holes \cite{Niu-2010-Cell}.

In this paper, we argue that, if NOs cooperate among them
by mutually sharing their users and BSs, then each one of them
can improve its net profit by either (a) reducing energy
costs by switching off its BSs and offloading its users to switched on
BSs of other NOs, or (b) increasing its earnings by attracting users
from other NOs, or by relying on BSs of other NOs to accept more users than what
could do by working alone.

Obviously, it is unreasonable to expect that each NO is willing to unconditionally
cooperate with the other ones regardless the benefits it receives.
Such a cooperation arises indeed only if
suitable benefits result from it, and if the risks of monetary losses
are kept within acceptable limits.

In this paper, we devise a decision algorithm that 
provides a set of NOs with suitable means to decide whether to cooperate with
other NOs, and if so with whom to cooperate.
Our algorithm is based on game-theoretic techniques,
where the process of establishing cooperation among the NOs is modeled as
a \emph{cooperative game with transferable utility} \cite{Peleg-2007-CooperativeGames}
(in particular, as a \emph{hedonic game} \cite{DrezeGreenberg1980}, whereby each
NO bases its decision on its own preferences).

More specifically, we propose a game-theoretic framework to study the problem of
forming \emph{stable} coalitions among NOs, and a mathematical optimization
model to allocate users to a set of BSs, in order to reduce costs and, at the
same time, to meet user QoS for NOs inside the same coalition.
We achieve our goal by devising
a \emph{hedonic shift algorithm} to form stable coalitions that allows
each NO to autonomously and selfishly decide whether to leave the current
coalition to join a different one or not on the basis of the net profit it
receives for doing so.

In our approach, each NO pays for the energy consumed to serve each user,
whether it belongs to it or to another NO, but receives a payoff (computed as
discussed later) for doing so.
We prove that the proposed algorithm converges to a \emph{Nash-stable} set of
disjoint coalitions \cite{BOGOMONLAIA-JACKSON_2002}, whereby no NO can benefit
to leave the current coalition to join a different one.

Our solution adopts an asynchronous approach in which each NO autonomously 
 makes its own decisions, and where the best solution arises without
the need to synchronize them or to resort to a trusted third party.
As a consequence, the solution we propose can be readily
implemented in a distributed fashion.

To demonstrate the effectiveness of the algorithm we propose, we carry out 
a thorough experimental evaluation considering real-world traffic traces,
and a set of realistic scenarios. 
The results we obtain indicate that our algorithm allows indeed a population of
NOs to significantly improve their profits thanks to the combination
of energy reduction and satisfaction of QoS requirements.

The contributions of this paper can be summarized as follows:
\begin{itemize}
\item we consider the problem of maximizing operators' profit in green cellular networks;
\item we model the problem as a cooperative game with transferable utility;
\item we devise a distributed algorithm enabling operators to find the coalition maximizing their profits under stability concerns;
\item we show its effectiveness through experimental analysis in realistic scenarios;
\item we assess the impact of energy price and user population on the profits attained by operators.
\end{itemize}

The rest of this paper is organized as follows.
In \dcsSecRef{sys}, we describe the system under study and we present the
problem addressed in this paper.
In \dcsSecRef{game}, we present the cooperative game-theoretic framework we use
to study the problem of coalition formation and the hedonic shift algorithm we
design to form stable coalitions.
In \dcsSecRef{exp}, we show results from an experimental evaluation to show the
effectiveness of the proposed approach.
In \dcsSecRef{related}, we provide an overview of related works.
Finally, in \dcsSecRef{concl}, we conclude the paper and present an outlook
on possible future extensions.

\section{System Model and Problem Definition} \label{sec:sys}

\subsection{System Model} \label{sec:sys-model}

We consider an area served by a set $\mathcal{N}=\{1,\ldots,N\}$ of NOs, 
whose BSs fully cover that area and whose coverage overlaps 
(as typically happens in urban areas~\cite{Niu-2010-Cell,AjmoneMarsan-2012-Multiple,Pollakis-2012-Base}).
To keep the notation simple, we assume that in the
area of interest there is only one BS per NO, so in the rest of this
paper, we will use the terms BS and NO interchangeably
(the extension of the model to support multiple BSs per NO is straightforward).

Each BS $i$ is characterized by its maximum downlink transmission capacity 
$C_i$, and by its power consumption $W_i(n_i)$ that, as
argued in~\cite{Peng-2011-Traffic,Deruyck-2012-Characterization,Lorincz-2012-Measurements},
is linearly dependent on the number of users it is serving, that is:
\begin{equation}\label{eq:bs-powmod}
W_i(n_i) = \alpha_i + \beta_i n_i
\end{equation}
where $\alpha_i$ (the \emph{static term}) is the load-independent power
consumption (which is usually known from the specifications of the BS,
and typically accounts for about 90\% of the total consumption~\cite{Oh-2011-Toward,Peng-2011-Traffic}),
and $\beta_i n_i$
(the \emph{dynamic term}) is the load-dependent power consumption (that
can be determined by linear regression from real power measurements~\cite{Lorincz-2012-Measurements}).

Each NO $i$ provides network connectivity to a set $\mathcal{U}_i$
of \emph{customers} (hereafter also referred to as \emph{users}).
Each user $j\in\mathcal{U}_i$ is characterized by its required QoS, quantified by
the minimum downlink data rate $D_j$ it requests, and
the actual downlink data rate $d_j$ it gets from the network.

Each user $j\in\mathcal{U}$ (where $\mathcal{U}=\bigcup_{i=1}^N\mathcal{U}_i$)
can connect to any BS in the system regardless of
the NO who owns the BS (i.e., it can connect to a BS that belongs to the NO
to which it is subscribed or it can roam on the BS of another one).
This can be accomplished by using techniques like \emph{cell wilting}
and \emph{blossoming}~\cite{Conte-2011-Cell}.
However, the aggregate allocated data rate to users connected to BS $i$ cannot
exceed its capacity $C_i$, that is:
\begin{equation}
\sum_{j\in \mathcal{U}_i}d_j \leq C_i. \label{eq:capacity-constraint}
\end{equation}

We assume that the number of users receiving service from a BS $i$ varies over time,
and is described by the \emph{load profile} curve $\ell_i(t)$ of that BS
that expresses, as function of time, the percentage of the maximum number of users $M_i$
that can receive service by BS $i$ when each user $j$
is allocated its entire desired data rate $D_j$.
It then follows that, if all the users have the same data rate requirement
(i.e., $D_j = D$ for all $j\in\mathcal{U}_i$), then $M_i=C_i/D$.
Conversely, if users are heterogeneous, then $M_j$ is
estimated as $M_j=C_i/\bar{D}$,
where $\bar{D}$ is the weighted average of the data rates requested
by users, i.e., $\bar{D}=\sum_{j\in\mathcal{U}_i} p_j D_j$, where $p_j$ is the
probability that user $j$ arrives at BS $i$.

An example of a typical daily load profile is depicted in
\dcsFigRef{trafload}, where the $x$-axis represents the time (in hours) and the
$y$-axis is the normalized load of the BS \cite{Willkomm-2008-Primary}.
For instance, if at a given time $t$, $M_i=10$ and $\ell_i(t)=0.8$, the number
$n_i(t)$ of user of BS $i$ at time $t$ is $n_i(t)=0.8 \cdot 10 = 8$.
\begin{figure}
\centering
\includegraphics[scale=0.5]{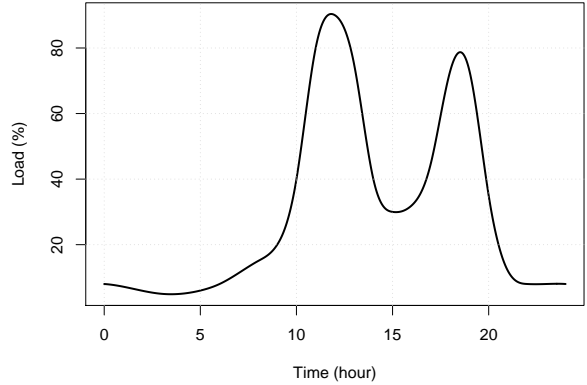}
\caption{A typical daily load profile $\ell_i(\cdot)$ of a single BS $i$.}\label{fig:trafload}
\end{figure}

\subsection{Problem Definition}\label{sec:sys-problem}

Given the system characterized as above and a particular area of interest,
each NO seeks to maximize its \emph{net profit} (i.e., the difference
between its revenues and costs) in the presence of a time-varying population
of users in this area.

The \emph{net profit rate} $P_i$ of NO $i$ (i.e., the profit it makes per unit of
time) can be expressed as
\begin{equation} \label{eq:profit}
P_i = \sum_{j\in \mathcal{U}_i} R_{i,j}-\biggl[W_i(n_i) E_i+\sum_{j\in \mathcal{U}_i} L_{i,j}(d_j)\biggr]
\end{equation}
where $R_{i,j}$ is the revenue rate generated by
user $j$ on BS $i$, $E_i$ is the electricity cost rate
of BS $i$, and $L_{i,j}(d_i)$ is the penalty rate
incurred by NO $i$ if user $j$ receives a downlink rate $d_j$ lower than
its QoS value $D_j$, which is given by the following loss function:
\begin{equation} \label{eq:qos-penalty}
L_{i,j}(d_j) = \biggl(1-\frac{d_j}{D_j}\biggr)R_{i,j}
\end{equation}
Thus, $L_{i,j}(d_i)$ is zero if the QoS of the user is completely satisfied (i.e.,
$d_j=D_j)$, and linearly increases until $R_{i,j}$ as the assigned data rate $d_j$
decreases (so that an NO gets no revenue from those customers that
receive no service).

If the NOs in the area of interest cooperate among them (i.e., they share their
users and BSs) then each NO $i$ can maximize the corresponding value of $P_i$
by acting on the various terms of \dcsEqRef{profit} as follows:
\begin{itemize}
\item it can attempt to reduce $W_i(n_i)$ by offloading (some of) its users
to the BSs of other operators so that its BS can be switched off entirely
(by exploiting \emph{sleep modes}~\cite{Micallef-2010-Cell}) or only
in part (by relying on \emph{cell zooming}~\cite{Niu-2010-Cell});
\item it can attempt to increase $R_{i,j}$ either by attracting users
from other NOs, so that it can better amortize its energy cost $W_i(n_i)$,
or by relying on BSs of other NOs to accept users that, if working alone, it
could not serve without violating \dcsEqRef{capacity-constraint}, thus
incurring into a (possibly high) penalty rate $L_{i,j}$.
\end{itemize} 

It is evident that, to exploit these opportunities, each NO must be willing to cooperate
with (at least some of) the other ones.
In the following section, we will characterize the conditions under which such
a cooperation is not only possible, but also sought by these NOs.

\section{The Coalition Formation Game} \label{sec:game}

As discussed before, cooperation is the key to increase profit.
However, it is unreasonable to assume that a NO is willing
to unconditionally cooperate with the other ones regardless of the benefits
it receives.
As a matter of fact, the acceptance of users roaming from other NOs
is beneficial only if the additional revenue they bring outweighs
the costs and the possible penalties they induce.
Furthermore, the offloading of users to other NOs makes sense only if
a suitable revenue results from this operation for the off-loader.

To cooperate, a set of NOs must first form a \emph{coalition},
i.e., they all must agree to share their own BSs and
users among them.
Given a set of NOs, however, there can be many different coalitions that can be
formed, each one differing from the other ones in terms of the structure (i.e.,
the identity of each member) and/or of the profit it brings to their members.

In order to join a coalition, a NO must indeed find it \emph{profitable},
i.e., it must be sure that the profit it earns by joining the coalition is no worse 
of the one it obtains by working alone.
Furthermore, in order to be sure that this profit is not ephemeral, 
a NO must seek other properties that guarantee the suitability of a coalition,
namely:
\begin{itemize}
\item \emph{Stability}: a coalition is \emph{stable} if none of its participants
finds that it is more profitable to leave it (e.g., to stay alone or to join
another coalition) rather than cooperating with the other ones.
Lack of stability causes possible monetary losses for the following reasons:
\begin{itemize}
\item a NO that has joined a coalition with the expectation of
receiving users roaming from other NOs is penalized if, after switching
on a BS on which to accommodate these users, these NOs leave the coalition;
\item a NO that has accepted more users than those it can serve
without incurring into a penalty, expecting to use the BSs of other
NOs to accommodate them, is penalized if these NOs leave the coalition.
\end{itemize}
\item \emph{Fairness}: when joining a coalition, a NO expects that the resulting
profits are fairly divided among participants.
As an unfair division leads to instability, a fair profit allocation strategy is mandatory.
\end{itemize}
From these considerations, it clearly follows that
a way must be provided to each NO to decide whether
to participate to a coalition or not and, if so, which one
among all the possible coalitions is worth joining.

In this paper, we address this issue by
modeling the problem of coalition formation
as a \emph{coalition formation cooperative game with
transferable utility} \cite{Peleg-2007-CooperativeGames,Book_RAY2007}, where each NO
cooperates with the other ones in order to maximize its net profit rate,
and by devising an algorithm to solve it.
By using our algorithm, the various NOs can make their decisions concerning
coalition membership.

In the rest of this section, we first set the coalition formation problem in the
game-theoretic framework (\dcsSecRef{NO-game}), then we present an algorithm to
form stable coalitions among NOs (\dcsSecRef{algo}), and finally we present an
optimization model to allocate users to a set of BSs, in order to reduce
costs and, at the same time, to meet user QoS for NOs inside the same coalition
(\dcsSecRef{optimization}).

\subsection{Characterization}\label{sec:NO-game}

Our coalition formation algorithm is based on a \emph{hedonic game}~\cite{DrezeGreenberg1980}, a class of
\emph{coalition formation cooperative games}~\cite{Peleg-2007-CooperativeGames,Book_RAY2007}
where each NO acts as a selfish agent and where its preferences over coalitions
depend only on the composition of that coalition.
That is, NOs prefer being in one coalition rather than in another
one solely based on who else is in the coalitions they belong.

Formally, given the set $\mathcal{N}=\{1,2,\ldots,N\}$ of NOs (henceforth also referred to as
the \emph{players}), a \emph{coalition} $\mathcal{S} \subseteq \mathcal{N}$ represents
an agreement among the NOs in $\mathcal{S}$ to act as a single entity.

At any given time, the set of players is partitioned into a
\emph{coalition partition} $\Pi$, that we define as the set
$\Pi = \{\mathcal{S}_1, \mathcal{S}_2, \ldots , \mathcal{S}_l \}$,
where $\mathcal{S}_k \subseteq \mathcal{N}$ ($k=1,\ldots, l$)  is a
disjoint coalition such that $\bigcup_{k=1}^l \mathcal{S}_k = \mathcal{N}$ and
$\mathcal{S}_j\cap\mathcal{S}_k=\emptyset$ for $j\ne k$.
Given a coalition partition $\Pi$, for any NO $i \in \mathcal{N}$, we denote as
$\mathcal{S}_{\Pi}(i)$ the coalition to which $i$ is participating.

Each coalition $\mathcal{S}$ is associated with its \emph{coalition value} $v(\mathcal{S})$,
that we define as the \emph{net profit rate} of that coalition, that is:
\begin{equation} \label{eq:value}
v\bigl(\mathcal{S}\bigr) = R\bigl(\mathcal{U}_\mathcal{S}\bigr) - Q\bigl(\mathcal{U}_{\mathcal{S}}\bigr) - K\bigl(\mathcal{S}\bigr)
\end{equation}
where:
\begin{itemize}
\item $R\bigl(\mathcal{U}_{\mathcal{S}}\bigr)$ is the \emph{coalition revenue rate}, corresponding to the sum of revenue rates of
individual users $j \in \mathcal{U}_{\mathcal{S}}$ (where
$\mathcal{U}_{\mathcal{S}} = \bigcup_{i\in\mathcal{S}}\mathcal{U}_i$
is the joint user population of the NOs belonging to $\mathcal{S}$);
\item $Q\bigl(\mathcal{U}_{\mathcal{S}}\bigr)$ is the \emph{coalition load cost rate}, and is computed by
minimizing the costs resulting from serving the users in
$\mathcal{U}_{\mathcal{S}}$ using all the resources provided by
the NOs belonging to $\mathcal{S}$ (we discuss this in \dcsSecRef{optimization});
\item $K(\mathcal{S})$ is the \emph{coalition formation cost rate}, that takes into
account the cost incurred by players to establish and maintain the coalition
(e.g., the costs for system reconfiguration to enable user migration and handover across
NOs).
In this paper, we assume $K(\mathcal{S})$ to be proportional to the coalition
size, and we define it as:

\begin{equation}
K(\mathcal{S})=\begin{cases}
                  \sum_{i \in \mathcal{S}}{K_i}, & \lvert\mathcal{S}\rvert > 1,\\
                  0, & \text{otherwise}.
                 \end{cases}
\end{equation}
where $K_i$ is the coalition formation cost rate for NO $i$.
\end{itemize}

Obviously, each NO $i\in\mathcal{S}$ must receive a fraction $x_i(\mathcal{S})$
of the coalition value, that we call the \emph{payoff} of $i$ in $\mathcal{S}$.
Our game is conceived in such a way to form coalitions in which NOs get payoffs as
high as possible, without violating the fairness requirement,
so that stability is achieved.
Thus, a \emph{payoff allocation rule} must be specified in order to compute the
payoffs of each coalition member in such a way to ensure fairness in the division of payoffs.

To this end, we use the \emph{Shapley value}~\cite{SHAPLEY_53}, a payoff
allocation rule that is based on the concept of  \emph{marginal
contribution} of players (i.e., the change in the worth of a coalition when a player
joins to that coalition), such that the larger is the contribution provided by a
player to a coalition, the higher is the payoff allocated to it.~\footnote{More specifically, we use the \emph{Aumann-Dr\'eze} value~\cite{Aumann-1974-Cooperative}, which is an extension of the Shapley value for games with coalition structures.}
This means that, in a given coalition, some ``more-contributing'' NOs will be
rewarded by other ``less-contributing'' NOs to encourage them to join the coalition.
More specifically, the Shapley value $\phi_i(v)$ of player $i$ is defined as:
\begin{equation} \label{eq:shapley}
\phi_i\bigl(v\bigr) = \sum_{\mathcal{S} \subseteq \mathcal{N} \setminus \{i\}}
\frac{\lvert\mathcal{S}\rvert!\bigl(N-\lvert\mathcal{S}\rvert-1\bigr)!}{N!} \Bigl( v\bigl(\mathcal{S} \cup \{i\}\bigr)-v\bigl(\mathcal{S}\bigr)\Bigr)
\end{equation}
where the sum is over all subsets $\mathcal{S}$ not containing $i$
(the symbol ``$\setminus$'' denotes the set difference operator), and the symbol
``$!$'' denotes the factorial function.

It is worth noting that we rely on the Shapley value for its interesting properties.
Nevertheless, other payoff allocation rules can be used and our work is general enough to support them.

To set up the coalition formation process, we need to define, for each
NO $i$, a \emph{preference relation} $\succeq_i$ that NO $i$ can use to
order and compare all the possible coalitions it may join.
Formally, this corresponds to define a complete, reflexive, and transitive binary relation
over the set of all coalitions that NO $i$ can form (see~\cite{BOGOMONLAIA-JACKSON_2002}).

Specifically, for any NO $i\in\mathcal{N}$ and given $\mathcal{S}_1,\mathcal{S}_2\subseteq\mathcal{N}$,
the notation $\mathcal{S}_1 \succeq_i \mathcal{S}_2$ means that NO $i$
prefers being a member of $\mathcal{S}_1$ over $\mathcal{S}_2$ or at least $i$ prefers
both coalitions equally.
In our coalition formation game, for any NO $i \in \mathcal{N}$, we use the
following preference relation:
\begin{equation}
\label{eq:preferences}
\mathcal{S}_1 \succeq_i \mathcal{S}_2 \Leftrightarrow u_i(\mathcal{S}_1) \ge u_i(\mathcal{S}_2),
\end{equation}
where $\mathcal{S}_1,\mathcal{S}_2 \subseteq \mathcal{N}$ are any two coalitions that
contain NO $i$ (i.e., $i \in \mathcal{S}_1$ and $i \in \mathcal{S}_2$),
and $u_i$
is a preference function defined
for any NO $i$ as follows:
\begin{equation}
\label{eq:preferencesUTIL}
u_i(\mathcal{S}) = \begin{cases}
                 x_i(\mathcal{S}), & \mathcal{S} \notin h(i), \\
                 -\infty, & \text{otherwise}.
                \end{cases}
\end{equation}
where $x_i(\mathcal{S})$ is the payoff received by NO $i$ in $\mathcal{S}$ by
means of \dcsEqRef{shapley}, and $h(i)$ is a \emph{history} set where NO $i$ stores the
identity of the coalitions that have been already evaluated so that
we avoid generating twice the same candidate coalition
(a similar idea for pruning already considered coalitions has also been used in previously
published work, such as in \cite{Saad-2011-Hedonic}).

Thus, according to \dcsEqRef{preferencesUTIL}, each NO prefers to join the coalition
that provides the larger payoff. 

The strict counterpart of $\succeq_i$, denoted by $\succ_i$,
is defined by replacing $\ge$ with $>$ in \dcsEqRef{preferences},
and implies that $i$ strictly prefers being a member of $\mathcal{S}_1$ over $\mathcal{S}_2$.

\subsection{The Algorithm for Coalition Formation} \label{sec:algo}

In this section, we present an algorithm for coalition
formation that allows the NOs to take distributed decisions
for selecting which coalitions to join at any point in time.
This algorithm is based on the following \emph{hedonic shift rule}
(see \cite{Saad-2011-Hedonic}):
\begin{definition}
Given a coalition partition
$\Pi=\{\mathcal{S}_1, \ldots, \mathcal{S}_h \}$ on the set $\mathcal{N}$ and a preference
relation $\succ_i$, any NO $i \in \mathcal{N}$ decides to leave its current
coalition $\mathcal{S}_{\Pi}(i)=\mathcal{S}_l$, for $1 \leq l \leq h$,
to join another one $\mathcal{S}_k \in \Pi \cup \emptyset$, with $\mathcal{S}_k \neq \mathcal{S}_l$,
if and only if
$\mathcal{S}_k \cup \{i \} \succ_i \mathcal{S}_{l}$,
that is if its payoff in the new coalition exceeds the one
it is getting in its current coalition.
Hence, $\{ \mathcal{S}_l, \mathcal{S}_k \} \to \{ \mathcal{S}_l \backslash \{i\} , \mathcal{S}_k \cup \{i\} \}$.
\end{definition}
This shift rule (that we denote as ``$\to$'') provides a mechanism through
which any NO can leave its current coalition $\mathcal{S}_{\Pi}(i)$
and join another coalition $\mathcal{S}_k$, given that the new
coalition $\mathcal{S}_k \cup \{i\}$ is strictly preferred over $\mathcal{S}_{\Pi}(i)$ through
any preference relation that the NOs are using.
This rule can be seen as a selfish decision made by a NO to move from its
current coalition to a new one, \emph{regardless of the effects of this move on
the other NOs}.

Using the hedonic shift rule, we design a distributed hedonic coalition
formation algorithm for NOs as presented in \dcsAlgRef{hcf}.
\begin{algorithm}
\caption{The Coalition Formation Algorithm for NOs}\label{alg:hcf}
\begin{algorithmic}[1]
\Procedure{CoalitionFormation}{$\mathit{state},i$}
 \State $h \gets \emptyset$ \label{alg:hcf-init-beg}
 \State $\mathcal{S}_\mathrm{best} \gets \emptyset$ \label{alg:hcf-init-end}
 \Repeat
  \State \Call{Lock}{$\mathit{state}$} \label{alg:hcf-lock}
  \State $\Pi_c \gets \Call{GetCurrentPartition}{\mathit{state}}$
  \State $\mathcal{S}_\mathrm{cur} \gets \mathcal{S}_{\Pi_c}(i)$
  \State $\mathcal{S}_\mathrm{best} \gets \mathcal{S}_\mathrm{cur}$
  \ForAll{$\mathcal{S} \in \bigl(\Pi_c \setminus \{\mathcal{S}_\mathrm{cur}\}\bigr) \cup \emptyset \text{ \textbf{and} } \mathcal{S} \notin h$} \label{alg:hcf-stage1-beg}
   \State $\mathcal{S}_\mathrm{new} \gets \mathcal{S} \cup \{i\}$
   \State $x_{\mathcal{S}_\mathrm{best}} \gets \Call{ComputePayoff}{\mathcal{S}_\mathrm{best},i}$ \Comment{See \dcsEqRef{value} and \dcsEqRef{shapley}}
   \State $x_{\mathcal{S}_\mathrm{new}} \gets \Call{ComputePayoff}{\mathcal{S}_\mathrm{new},i}$ \Comment{See \dcsEqRef{value} and \dcsEqRef{shapley}}
   \If{$x_{\mathcal{S}_\mathrm{new}} > x_{\mathcal{S}_\mathrm{best}}$} \Comment{See \dcsEqRef{preferences} and \dcsEqRef{preferencesUTIL}}
    \State $\mathcal{S}_\mathrm{best} \gets \mathcal{S}_\mathrm{new}$
   \EndIf
  \EndFor \label{alg:hcf-stage1-end}
  \If{$\mathcal{S}_\mathrm{best} \ne \mathcal{S}_\mathrm{cur}$} \label{alg:hcf-stage2-beg}
   \State $\mathcal{S} \gets \mathcal{S}_\mathrm{cur} \setminus \{i\}$
   \State $\mathcal{T} \gets \mathcal{S}_\mathrm{best} \setminus \{i\}$
   \State $\Call{UpdateHistory}{h,\mathcal{S}}$
   \State $\Pi_\mathrm{best} \gets \bigl(\Pi_c \setminus \{\mathcal{S}_\mathrm{cur},\mathcal{T}\}\bigr) \cup \bigl\{\mathcal{S},\mathcal{S}_\mathrm{best}\bigr\}$
   \State \Call{SetCurrentPartition}{$\mathit{state},\Pi_\mathrm{best}$}
  \EndIf \label{alg:hcf-stage2-end}
  \State \Call{Unlock}{$\mathit{state}$} \label{alg:hcf-unlock}
 \Until{$\mathcal{S}_\mathrm{best} = \mathcal{S}_\mathrm{cur}$} \label{alg:hcf-until}
\EndProcedure
\end{algorithmic}
\end{algorithm}

The basic idea of the algorithm is to have each NO $i$ search, asynchronously
with respect to the other NOs, the state space of possible coalitions it may
join, and for each one of them, evaluate whether it is preferable
(according to the corresponding $\succ_i$ relation) to remain in its current
coalition, or to join it.
Whenever a NO decides to move from a coalition to another one, it updates its
history set $h(i)$ by appending the coalition it is leaving, so that the same
coalition is not visited twice during the coalition space search.
A NO iterates the actions listed in \dcsAlgRef{hcf} until no more hedonic shift
rules are possible.
It is worth noting that the asynchronicity of our algorithm makes it suitable
to be executed, for instance, when new users arrive to NOs, thus making it
able to adapt to environmental changes.

Let us explain in detail how \dcsAlgRef{hcf} works.
The algorithm takes as parameters the global state $\mathit{state}$, storing the
current shared coalition partition $\Pi_c$, and the identity $i$ of the calling
NO (initially there are no coalitions, i.e.,
$\Pi_c = \Pi_0 = \bigl\{ \{ 1 \}, \{ 2 \}, \ldots, \{ N \} \bigr\}$).

At each execution of the algorithm, NO $i$ initializes its history set $h$ and other auxiliary variables (lines \ref{alg:hcf-init-beg}--\ref{alg:hcf-init-end}), and then enters
a loop that is executed until no more hedonic shift rules can be performed
from the last coalition partition considered by $i$.

In each loop iteration, NO $i$ retrieves the current coalition partition,
and generates all the possible hedonic shifts until no more of them are possible.
Given the distributed nature of the algorithm, we postulate the 
use of suitable distributed space management algorithms (e.g. \cite{Kshemkalyani-2008-Distributed,Weiss-2013-MAS}).

Then, after acquiring a lock to gain exclusive access to the shared state (line
\ref{alg:hcf-lock}) in order to ensure its atomic update 
(by means of a suitable distributed  mutual exclusion algorithm \cite{Kshemkalyani-2008-Distributed}),
NO $i$ iteratively evaluates all the possible coalitions it
can form from its current coalition partition, to look for the one with the
higher payoff.

To do so, given its current coalition partition $\Pi_c$, for each coalition
$\mathcal{S}_k\in\Pi_c\cup\emptyset$ (not present in its history set and
different from its current one $\mathcal{S}_{\Pi_c}(i)$), NO $i$ applies the
hedonic shift rule and evaluates its preference against the current coalition
$\mathcal{S}_{\Pi_c}(i)$ (lines \ref{alg:hcf-stage1-beg}--\ref{alg:hcf-stage1-end}).

If a coalition $\mathcal{S}_k$ with the higher payoff is found
(lines \ref{alg:hcf-stage2-beg}--\ref{alg:hcf-stage2-end}), NO $i$ adds to its
history set $h$ the coalition $\mathcal{S}_{\Pi_c}(i)\setminus\{i\}$ it is
leaving, and updates the partition set by updating both $\mathcal{S}_k$
(that now contains also $i$) and $\mathcal{S}_{\Pi_c}(i)$ (that now does not
contain $i$ anymore).

Then, after releasing the exclusive lock to the shared state (line
\ref{alg:hcf-unlock}), NO $i$ repeats the above steps (lines
\ref{alg:hcf-lock}--\ref{alg:hcf-unlock}) to look for a better
coalition, in case some other NO has meanwhile modified 
the shared state by changing the coalition partition.

Eventually, if no other better coalition is found, NO $i$ terminates the
execution of the algorithm (line \ref{alg:hcf-until}), until a new instance
is run again.

The convergence of the proposed algorithm during the hedonic
coalition formation phase is guaranteed as follows:
\begin{proposition}[Convergence] \label{thm:convergence}
Starting from any initial coalition structure $\Pi_0$, the proposed algorithm
always converges to a final partition $\Pi_f$.
\end{proposition}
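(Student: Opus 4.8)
The plan is to argue convergence via a potential-style argument on the sequence of coalition partitions visited by the algorithm, exploiting the history sets $h(i)$ to bound the number of shift operations. First I would observe that each time any NO $i$ executes a successful hedonic shift $\{\mathcal{S}_l,\mathcal{S}_k\}\to\{\mathcal{S}_l\setminus\{i\},\mathcal{S}_k\cup\{i\}\}$, by the shift rule and the preference function in \dcsEqRef{preferencesUTIL} its payoff strictly increases, i.e. $x_i(\mathcal{S}_k\cup\{i\}) > x_i(\mathcal{S}_{\Pi}(i))$; in particular the newly joined coalition $\mathcal{S}_k\cup\{i\}$ cannot have been in $h(i)$ at the time of the move (otherwise $u_i$ would equal $-\infty$), and the coalition $\mathcal{S}_l\setminus\{i\}$ that $i$ leaves is appended to $h(i)$ by \textsc{UpdateHistory}. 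Hence no NO ever re-forms, from its own side, a coalition configuration it has already abandoned.

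The key step is then a counting argument. Since $\mathcal{N}$ is finite, the set of all coalition partitions of $\mathcal{N}$ is finite — its cardinality is the Bell number $B_N$. Each application of the shift rule produces a new partition, and I would show that the algorithm cannot cycle: if it did, there would be a finite sequence of partitions $\Pi^{(1)}\to\Pi^{(2)}\to\cdots\to\Pi^{(1)}$ returning to a previously visited partition. Because each move strictly increases the payoff of the moving NO while only affecting the two coalitions involved, and because the history mechanism forbids an NO from returning to any coalition (minus itself) it has left, one derives a contradiction: traversing the cycle, some NO would have to rejoin a coalition already stored in its history, which the algorithm forbids by \dcsEqRef{preferencesUTIL}. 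Therefore the sequence of visited partitions is acyclic, and being a walk through a finite set it must be finite; the algorithm halts at some partition $\Pi_f$ from which no NO can perform a profitable (and history-permitted) shift, which is exactly the termination condition of the \textbf{repeat}–\textbf{until} loop in \dcsAlgRef{hcf}.

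The main obstacle I anticipate is handling the \emph{asynchrony} and the shared mutable state cleanly. Unlike the classical single-decision-maker hedonic shift argument, here several NOs interleave their moves under the lock/unlock discipline (lines \ref{alg:hcf-lock}--\ref{alg:hcf-unlock}), so the ``sequence of partitions'' must be taken as the global serialization of locked updates, and one must check that the per-NO history invariant still yields a global bound. Concretely, the subtle point is that a move by NO $j$ can change the coalition structure around NO $i$ in a way that re-enables shifts for $i$ that were previously unavailable; I would argue that even so, the total number of shifts is bounded because each shift is charged to the pair (moving NO, coalition-left) and each NO's history only grows, so across the whole system at most $\sum_{i\in\mathcal{N}} 2^{N-1}$ distinct (NO, left-coalition) events can occur, giving an explicit finite bound on the number of iterations and hence convergence to $\Pi_f$. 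A secondary point worth stating carefully is that a coalition partition reached mid-execution is always well defined (the two modified sets are disjoint from the rest and cover $\mathcal{N}$), so every intermediate $\Pi_c$ is a legitimate element of the finite partition space.
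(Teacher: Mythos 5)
Your proposal is correct and follows essentially the same route as the paper's proof: the algorithm induces a sequence of shift operations, the history mechanism ensures no configuration is revisited, and finiteness of the partition space (bounded by the Bell number $B_N$) forces termination at some $\Pi_f$. The additional care you take with the per-NO history invariant and the serialization of locked updates under asynchrony is a finer-grained justification of the paper's one-line claim that ``every application of the shift rule leads to a new coalition partition,'' not a different argument.
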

\begin{proof}
The coalition formation phase can be mapped to a sequence of shift operations.
That is, according to the hedonic shift rule, every shift operation transforms
the current partition $\Pi_c$ into another partition $\Pi_{c+1}$.
Thus, starting from the initial step, the algorithm yields the following
transformations:
\begin{equation} \label{eq:sequence}
\Pi_0 \to \Pi_1 \to \cdots \to \Pi_c \to \Pi_{c+1}
\end{equation}
where the symbol ``$\to$'' denotes the application of a shift operation.
Every application of the shift rule leads to a coalition partition that has not
been previously visited (i.e., a new coalition partition).
Thus, the number of transformations performed by the shift rule is finite (at
most, it is equal to the number of partitions, that is the Bell number) and
hence the sequence in \dcsEqRef{sequence} will always terminate and converge to
a final partition $\Pi_f$.
\end{proof}
The stability of the final partition $\Pi_f$ resulting from the
convergence of the proposed algorithm can be addressed by using 
the following stability definition (see \cite{BOGOMONLAIA-JACKSON_2002} for details).

\begin{definition}
\label{NASH_STAB}
{\em
A coalition partition $\Pi=\{\mathcal{S}_1, \ldots, \mathcal{S}_l\}$ is \emph{Nash-stable}
if $\forall i\in\mathcal{N}$, $\mathcal{S}_{\Pi}(i)\succeq_i\mathcal{S}_k\cup\{i\}$
for all $\mathcal{S}_k \in \Pi \cup \emptyset$.
}
\end{definition}
It is worth noting that Nash-stability captures the notion of stability
with respect to movements of single NOs (i.e., no NO has an incentive to
unilaterally deviate).

For the hedonic coalition formation phase of the proposed
algorithm, we can prove the following result: 

\begin{proposition}[Nash-stability] \label{thm:nash-stable}
Any final partition $\Pi_f$ resulting from \dcsAlgRef{hcf} is Nash-stable.
\end{proposition}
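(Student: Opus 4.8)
The plan is to argue by contradiction, leveraging the convergence result (Proposition~\ref{thm:convergence}) which guarantees that \dcsAlgRef{hcf} terminates at some final partition $\Pi_f$. First I would spell out what termination means operationally: the algorithm halts only when every NO $i$, upon executing its loop body, finds $\mathcal{S}_\mathrm{best} = \mathcal{S}_\mathrm{cur}$, i.e.\ no NO can improve its payoff by shifting to any coalition currently in the partition (including the empty coalition, corresponding to going solo). So I would begin by fixing the final partition $\Pi_f$ and suppose, toward a contradiction, that it is \emph{not} Nash-stable. By Definition~\ref{NASH_STAB}, this means there exists some NO $i\in\mathcal{N}$ and some coalition $\mathcal{S}_k\in\Pi_f\cup\{\emptyset\}$ with $\mathcal{S}_k\neq\mathcal{S}_{\Pi_f}(i)$ such that $\mathcal{S}_k\cup\{i\}\succ_i\mathcal{S}_{\Pi_f}(i)$.

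Next I would relate this violation back to the algorithm's stopping condition. The delicate point is the history set $h$: NO $i$ only considers coalitions $\mathcal{S}\notin h$ in the \textbf{forall} loop of \dcsAlgRef{hcf}, so a priori a profitable shift could exist but be ``hidden'' because $\mathcal{S}_k\setminus\{i\}$ was already recorded in $h(i)$. I would dispatch this by showing that if $\mathcal{S}_k\in h(i)$ at termination, then $i$ was at some earlier iteration a member of $\mathcal{S}_k\cup\{i\}$ and left it precisely because it found a strictly better coalition — and since each subsequent shift by $i$ only moves to strictly higher payoff (the algorithm's \textbf{if} test is $x_{\mathcal{S}_\mathrm{new}} > x_{\mathcal{S}_\mathrm{best}}$), $i$'s payoff is monotonically nondecreasing along its own sequence of moves, so $i$'s final payoff $x_i(\mathcal{S}_{\Pi_f}(i))$ is at least its payoff in $\mathcal{S}_k\cup\{i\}$ evaluated at that past time. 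However — and this is the crux of the obstacle — the coalition $\mathcal{S}_k$ in the \emph{final} partition need not be the same set as the $\mathcal{S}_k$ that $i$ visited in the past, because other NOs may have joined or left $\mathcal{S}_k$ in the meantime; the Shapley payoff depends on the current composition. So the clean argument is the simpler one: at the \emph{last} time $i$ executed the loop body (which happens, since the loop runs at least once and terminates with $\mathcal{S}_\mathrm{best}=\mathcal{S}_\mathrm{cur}$), the partition it saw was $\Pi_f$ itself (no further changes occur after the last NO's last iteration, by definition of $\Pi_f$ being final), and for that execution the stopping condition forces $x_i(\mathcal{S}_k\cup\{i\}) \le x_i(\mathcal{S}_{\Pi_f}(i))$ for every $\mathcal{S}_k\in\Pi_f$ with $\mathcal{S}_k\notin h(i)$.

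The remaining gap is therefore exactly the coalitions sitting in $h(i)$ at the end. Here I would invoke the monotonicity of $i$'s payoff together with the observation that whenever $i$ placed a set $\mathcal{S} = \mathcal{S}_{\Pi}(i)\setminus\{i\}$ into $h(i)$ (line \texttt{UpdateHistory}), it was simultaneously moving to a strictly preferred coalition; combined with the fact that, once $\mathcal{S}$ is in $h(i)$, NO $i$ never again adds members to or removes members from $\mathcal{S}$ from $i$'s side (it is simply excluded from reconsideration), I would argue that the composition of any such $\mathcal{S}$ as it appears in $\Pi_f$ yields a payoff for $i$ no larger than what $i$ already secured — otherwise $i$ would not have been at a strictly improving step and the payoff sequence would not be monotone. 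Conclude that for \emph{every} $\mathcal{S}_k\in\Pi_f\cup\{\emptyset\}$ we have $u_i(\mathcal{S}_k\cup\{i\})\le u_i(\mathcal{S}_{\Pi_f}(i))$, i.e.\ $\mathcal{S}_{\Pi_f}(i)\succeq_i\mathcal{S}_k\cup\{i\}$, contradicting the assumed violation. Since $i$ was arbitrary, $\Pi_f$ is Nash-stable. The main obstacle, as flagged, is handling the interaction between the history-set pruning and the fact that payoffs are composition-dependent in an asynchronous setting; I expect the cleanest writeup restricts attention to each NO's \emph{own} last iteration and uses the global termination of Proposition~\ref{thm:convergence} to ensure the partition is stationary thereafter, so that the history set can only exclude coalitions that were already dominated.
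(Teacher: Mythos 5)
Your core argument is the same as the paper's: the paper's proof is exactly the two-line contradiction you sketch in your second paragraph --- if $\Pi_f$ were not Nash-stable there would exist $i$ and $\mathcal{S}_k\in\Pi_f\cup\{\emptyset\}$ with $\mathcal{S}_k\cup\{i\}\succ_i\mathcal{S}_{\Pi_f}(i)$, so $i$ would perform a hedonic shift, contradicting the finality of $\Pi_f$. The paper does not engage with the history set at all, and the reason it does not need to is definitional rather than dynamical: by \dcsEqRef{preferencesUTIL}, $u_i(\mathcal{S})=-\infty$ whenever $\mathcal{S}\in h(i)$, and Definition~\ref{NASH_STAB} is stated in terms of $\succeq_i$, which is built from $u_i$. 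Hence a coalition excluded by the history set can never be \emph{strictly} preferred under $\succ_i$, and the ``hidden profitable deviation'' you worry about simply cannot witness a violation of Nash-stability as the paper defines it. In other words, the stability claim is relative to the history-modified preferences, and under that reading your first two paragraphs already constitute a complete proof.

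The part of your proposal that goes beyond this --- trying to show that history-excluded coalitions are also dominated in terms of \emph{raw} Shapley payoffs via monotonicity of $i$'s payoff along its own moves --- does not go through, for precisely the reason you yourself flag and then do not resolve: the membership of a coalition $\mathcal{S}$ recorded in $h(i)$ can change after $i$ leaves it (other NOs join or depart), and the Aumann--Dr\'eze/Shapley payoff $i$ would receive in $\mathcal{S}\cup\{i\}$ evaluated against the \emph{final} composition is not bounded by the payoff $i$ saw there in the past. Monotonicity of $x_i$ along $i$'s own shift sequence gives you no control over that quantity, and your closing assertion that ``the composition of any such $\mathcal{S}$ as it appears in $\Pi_f$ yields a payoff for $i$ no larger than what $i$ already secured'' is exactly the unproved step. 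So: if you intend the proposition in the paper's sense, drop the third paragraph and cite \dcsEqRef{preferencesUTIL}; if you intend the stronger raw-payoff sense, your argument has a genuine gap there and the statement would need a different (or restricted) proof.
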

\begin{proof}
We prove it by contradiction.
Assume that the final partition $\Pi_f$ is not Nash-stable.
Consequently, there exists a NO $i \in \mathcal{N}$ and a coalition
$\mathcal{S}_k\in\Pi_f\cup\emptyset$ such that $\mathcal{S}_k\cup\{i\}\succ_i\mathcal{S}_{\Pi_f}(i)$.
Then, NO $i$ will perform a hedonic shift operation and hence
$\Pi_f \to \Pi'_f$, where $\Pi_f'$ is the new coalition partition
resulting after the hedonic shift operation.
This contradicts the assumption that $\Pi_f$ is the final outcome of our
algorithm.
\end{proof}

It is worth to point out that Nash-stability also implies the so called \emph{individual-stability}~\cite{BOGOMONLAIA-JACKSON_2002}.
A partition $\Pi=\{ \mathcal{S}_1,\ldots,\mathcal{S}_l \}$ is
\emph{individually-stable} if it does not exist a NO $i \in \mathcal{N}$ and a
coalition $\mathcal{S}_k \in \Pi \cup \emptyset$ such that
$\mathcal{S}_k\cup\{i\}\succ_i\mathcal{S}_{\Pi}(i)$ and
$\mathcal{S}_k\cup\{i\}\succeq_j\mathcal{S}_k$ for all $j \in \mathcal{S}_k$,
i.e., if no NO can benefit by moving from its coalition to another
existing (possibly empty) coalition while not making the members of that
coalition worse of.
Thus, we can conclude that our algorithm always converges to a partition $\Pi_f$
which is both Nash-stable and individually stable.

Given the NP-completeness of the problem of finding a Nash-stable partition
\cite{Ballester-2004-NP}, the computational cost of our algorithm can become
quite large when the number of NOs increases
(indeed, in the worst case, it is bounded by the $N^\text{th}$ Bell number,
where $N$ is the number of NOs).
In these cases, we can reduce the computational cost by having each NO check
for the Nash-stability of its current coalition partition (this check takes
polynomial time~\cite{Ballester-2004-NP}), and execute the algorithm only
if Nash-stability no longer holds true.

\subsection{Computation of the Optimal Coalition Load Cost}\label{sec:optimization}

The algorithm presented in the previous section requires the computation of the
value $v(\mathcal{S})$ of any coalition $\mathcal{S}$ that each NO $i$ may possibly join,
that in turn requires the computations of coalition load cost rate $Q(\mathcal{U}_{\mathcal{S}})$
(see \dcsEqRef{value}).
To compute $Q(\mathcal{U}_{\mathcal{S}})$, we need in turn to determine, for
the coalition $\mathcal{S}$, the optimal data rate allocation (i.e., the
allocation of users that minimizes the costs of NOs).

To this end, we define a \emph{Mixed Integer Linear Program} (MILP)
modeling the problem of allocating a set $\mathcal{U}_{\mathcal{S}}$ of users
onto a set $\mathcal{S}$ of BSs so that the overall cost rates
of NOs in $\mathcal{S}$ are minimized.
The resulting optimization model is shown in \dcsFigRef{opt-maxprofit}, where
we use the same notation introduced in \dcsSecRef{sys} (however, to ease
readability, we denote with $\mathcal{U}$ the user set, i.e., we drop
the dependence from $\mathcal{S}$).
\begin{figure}
\centering
\begin{small}
\dcsHRule
\begin{subequations}\label{eq:opt-maxprofit}
\begin{align}
\text{minimize} & \,\, Q(\mathcal{U}) = \sum_{i \in \mathcal{S}}{\Biggl[b_i W_i\biggl(\sum_{j \in \mathcal{U}}{u_{i,j}}\biggr) E_i + \sum_{j \in \mathcal{U}}{L_{i,j}\bigl(d_{i,j}\bigr)}\Biggr]} \label{eq:opt-maxprofit-obj}\\
\text{subject to} \nonumber \\
& \sum_{j \in \mathcal{U}}{d_{i,j}} \le C_i, \qquad \qquad \qquad \qquad \; \; \; i \in \mathcal{S}, \label{eq:opt-maxprofit-c1}\\
& \sum_{i \in \mathcal{S}}{u_{i,j}} = 1, \qquad \qquad \qquad \qquad \; \; \; \; j \in \mathcal{U}, \label{eq:opt-maxprofit-c2}\\
& \sum_{j \in \mathcal{U}}{u_{i,j}} \le b_i U, \qquad \qquad \qquad \qquad \; i \in \mathcal{S}, \label{eq:opt-maxprofit-c3}\\
& d_{i,j} \le u_{i,j}D_j, \qquad \qquad \qquad \qquad \; \; \, i \in \mathcal{S}, j \in \mathcal{U}, \label{eq:opt-maxprofit-c4}\\
& d_{i,j} \in \mathbb{R}^*, \qquad \qquad \qquad \qquad \; \; \; \; \; \; \; \; i \in \mathcal{S}, j \in \mathcal{U}, \label{eq:opt-maxprofit-c6}\\
& u_{i,j} \in \lbrace 0,1 \rbrace, \qquad \qquad \qquad \qquad \; \; \; i \in \mathcal{S}, j \in \mathcal{U}, \label{eq:opt-maxprofit-c7}\\
& b_i \in \lbrace 0,1 \rbrace, \qquad \qquad \qquad \qquad \; \; \; \; \; i \in \mathcal{S}. \label{eq:opt-maxprofit-c8}
\end{align}
\end{subequations}
\dcsHRule
\end{small}
\caption{The user-to-BS allocation optimization model.}\label{fig:opt-maxprofit}
\end{figure}

In the optimization model we use the following decision variables:
\begin{itemize}
\item $u_{i,j}$, which is a binary
variable that is equal to $1$ if user $j$ is allocated to BS $i$;
\item $d_{i,j}$, which is a real variable representing the downlink data rate allocated to
user $j$ by BS $i$;
\item $b_i$, which is a binary variable that is equal to $1$ if BS $i$ is
switched on.
\end{itemize}
The objective function $Q\bigl(\mathcal{U}\bigr)$ (see \dcsEqRef{opt-maxprofit-obj})
represents the cost rates incurred by the coalition of NOs for serving users in $\mathcal{U}$,
and is defined as the sum of the costs due to the power absorbed by
the BSs that are switched-on, and of those due to QoS violations (if any).

The resulting optimal user allocation is bound to the following constraints:
\begin{itemize}
\item \dcsEqRef{opt-maxprofit-c1} ensures that the capacity of a switched-on BS
is not exceeded;
\item \dcsEqRef{opt-maxprofit-c2} imposes that each user is served by exactly
one BS;
\item \dcsEqRef{opt-maxprofit-c3} states that only BSs that are switched on
can serve users; the purpose of this constraint is to avoid that a user is
served by a BS that will be switched off;
\item \dcsEqRef{opt-maxprofit-c4} imposes that each user obtains at most the
requested data rate by the serving BS;
\item \dcsEqRef{opt-maxprofit-c6}, \dcsEqRef{opt-maxprofit-c7}, and
\dcsEqRef{opt-maxprofit-c8} define the domain of decision variables
$d_{i,j}$, $u_{i,j}$, and $b_i$, respectively.
\end{itemize}

As can be noted from the definition of $Q\bigl(\mathcal{U}\bigr)$,
the solution of the optimization problem at a specif instant of time requires
the knowledge of the number of users present in each BS $i$ at that time.
In general, however, such number is not constant, but it varies over
time according to the corresponding load profile $\ell_i(t)$. 

In order to compute the number of users of BS $i$ at time $t$ from $\ell_i$, we proceed as follows:
first, as typically done in the literature~\cite{Lorincz-2012-Impact,Boiardi-2013-Radio},
we discretize  $\ell_i(t)$ by splitting the time axis into uniform disjoint
sub-intervals
$[\tau,\tau+\Delta t)$ of length $\Delta t$ time units (where $\Delta t$ is the
\emph{discretization step}).
Then, we approximate the (normalized) load of each
subinterval as a constant value set to the peak load of that subinterval.
For instance, the result of the discretization of the load profile of
\dcsFigRef{trafload} with $\Delta t$ of $1$ hour is depicted in
\dcsFigRef{trafload-discrete}.
In this figure, the time-horizon of one day (i.e., $24$ hours) is split into
several subintervals $[\tau,\tau+\Delta t)$ of length $\Delta t=1$ hour, where
$\tau=0,1,\ldots,23$.
Each subinterval is delimited by vertical dotted segments, while every
horizontal solid red segment is the peak inside each subinterval, that we will
use as an approximation of the (normalized) load inside the subinterval.

\begin{figure}
\centering
\includegraphics[scale=0.5]{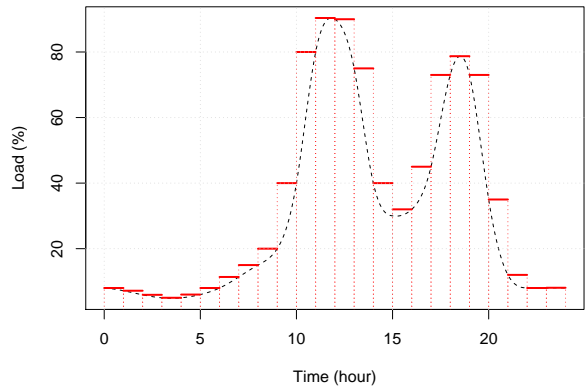}
\caption{Discretization of the load profile of \dcsFigRef{trafload} with a time-horizon of $1$ day and $\Delta t=1$ hour (vertical segments represent subintervals bounds and horizontal segments are peaks inside subintervals).}
\label{fig:trafload-discrete}
\end{figure}

\section{Experimental Evaluation} \label{sec:exp}

In order to assess the ability of our algorithm of increasing the net profits for a population of NOs,
we perform a set of experiments in which we consider a variety of 
realistic scenarios and real-world traffic data.
In these experiments we vary, in a controlled way, various input parameters
of the algorithm, namely the cost of energy, the QoS requirements of users, and
the discretization step of the traffic profile curve, so that we are able to
assess the impact of each one of them on the performance of the algorithm.
The results we collect, discussed in this section, demonstrate
the ability of our algorithm of yielding significant increases of
the net profit achieved by a set of NOs in all the scenarios we consider.

To perform such experiments, we develop an ad-hoc simulator written in
\texttt{C++} and interfaced with CPLEX \cite{CPLEX} to solve the various
instances of the optimization model presented in \dcsSecRef{optimization}.

\subsection{Experimental Setup} \label{sec:exp-cfg}

We consider a system configuration comprising five NOs,
each one owning a single BS. Without loss of generality,
we assume that all the BSs are identical in terms of
capacity and energy consumption. More specifically,
we set $C_i=100$ Mbps, 
$\alpha_i=0.551$ kW, and $\beta_i=0.00146$ kW for $i=\{1,\ldots,5\}$
(these last two values have been taken from~\cite{Lorincz-2012-Measurements}).
We also assume that all NOs incur in the same coalition cost rate,
i.e., $K_i = 0.01$ \$/hour $\forall i\in \{1,\ldots,5\}$.

Furthermore, we assume that each BS has its own load profile, that differs from
those of the other ones.
The load profiles we consider in our experiments, 
reported in Figs.~\ref{fig:exp-load_0}--\ref{fig:exp-load_4},
have been obtained from real-world data~\cite{ANRG_Data} consisting of 
normalized cellular traffic collected, with a resolution of $30$ minutes, in a
metropolitan urban area during one week, and been already used in similar
studies~\cite{AjmoneMarsan-2012-Multiple,Son-2011-Base}.
Specifically, each load profile curve $\ell_i(\cdot)$ of \dcsFigRef{exp-load}
has been obtained by fitting a periodic cubic spline to the traffic data related
to BS $i$.
\begin{figure}
\centering
\subfloat[][Traffic load curve for BS $1$\label{fig:exp-load_0}]{
\centering
\includegraphics[scale=.30]{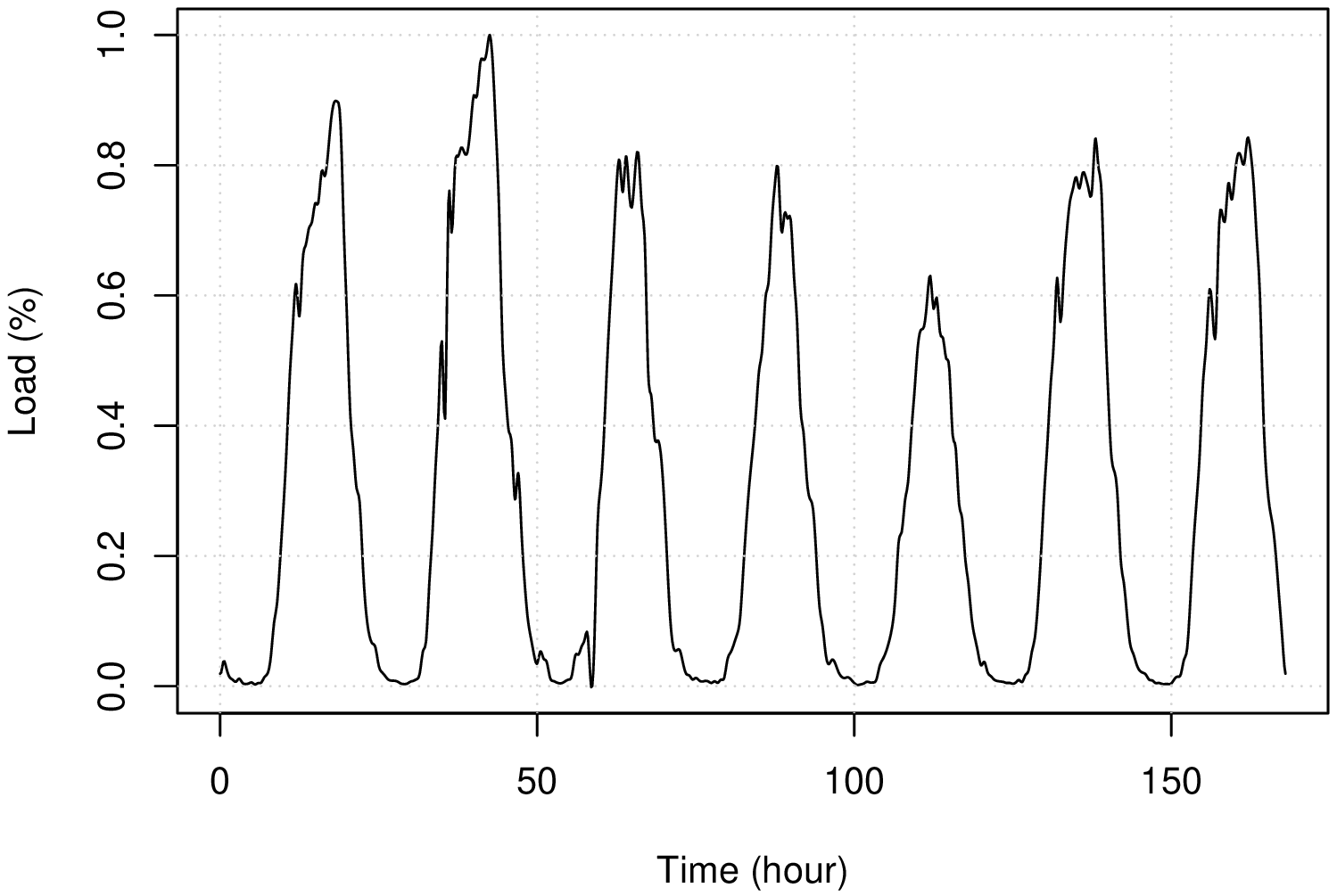}
}
~
\subfloat[][Traffic load curve for BS $2$\label{fig:exp-load_1}]{
\centering
\includegraphics[scale=.30]{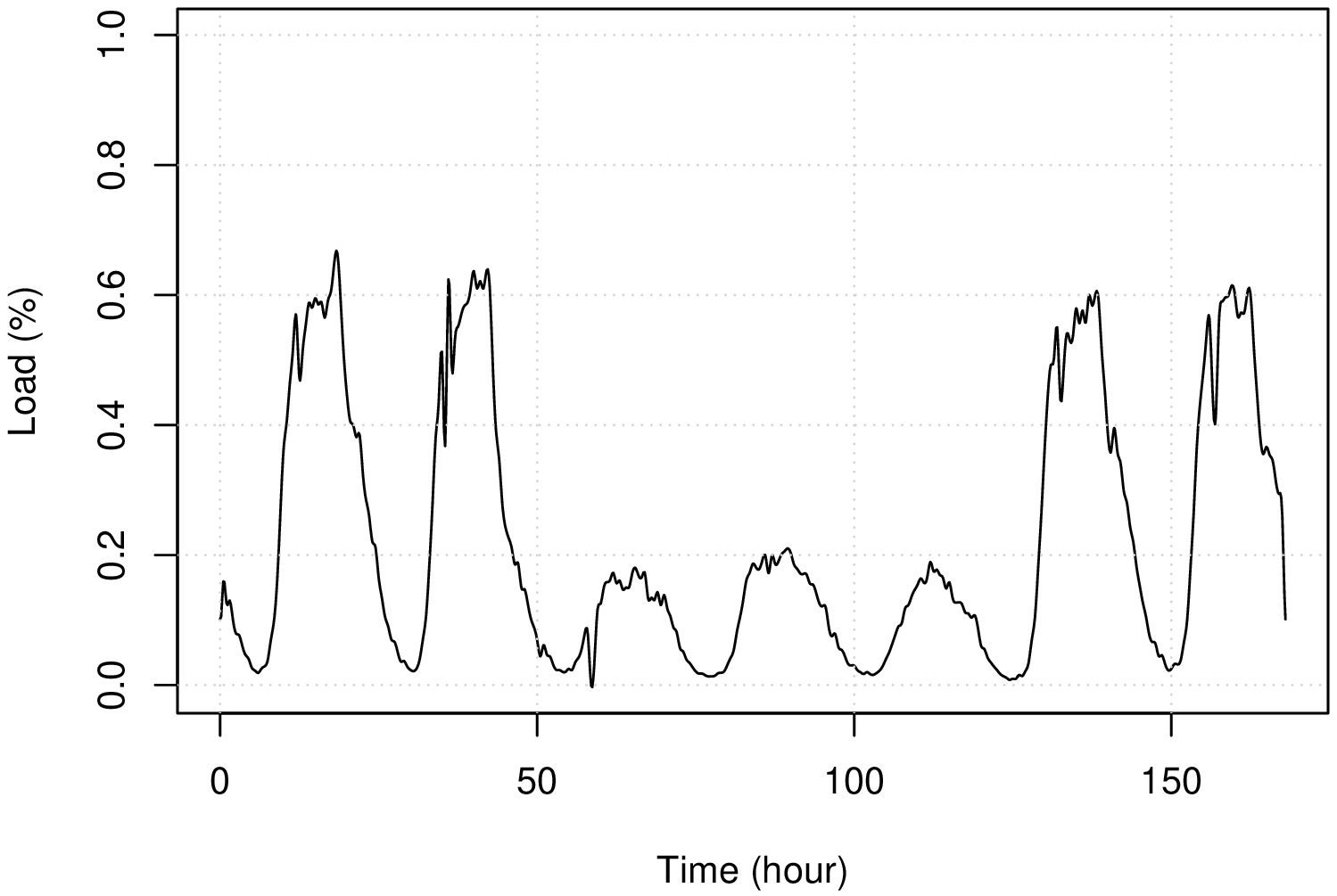}
}
\\
\subfloat[][Traffic load curve for BS $3$\label{fig:exp-load_2}]{
\centering
\includegraphics[scale=.30]{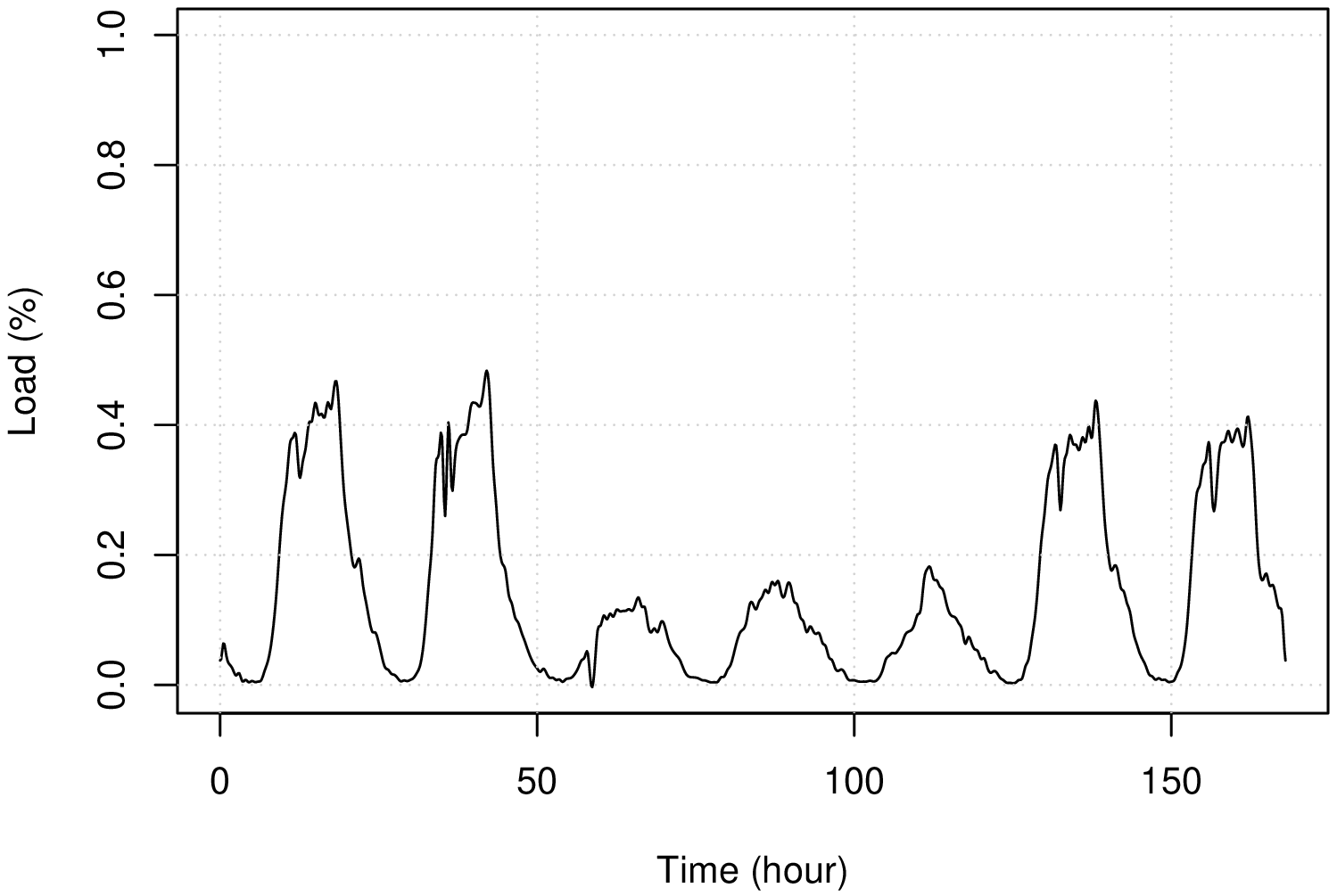}
}
~
\subfloat[][Traffic load curve for BS $4$\label{fig:exp-load_3}]{
\centering
\includegraphics[scale=.30]{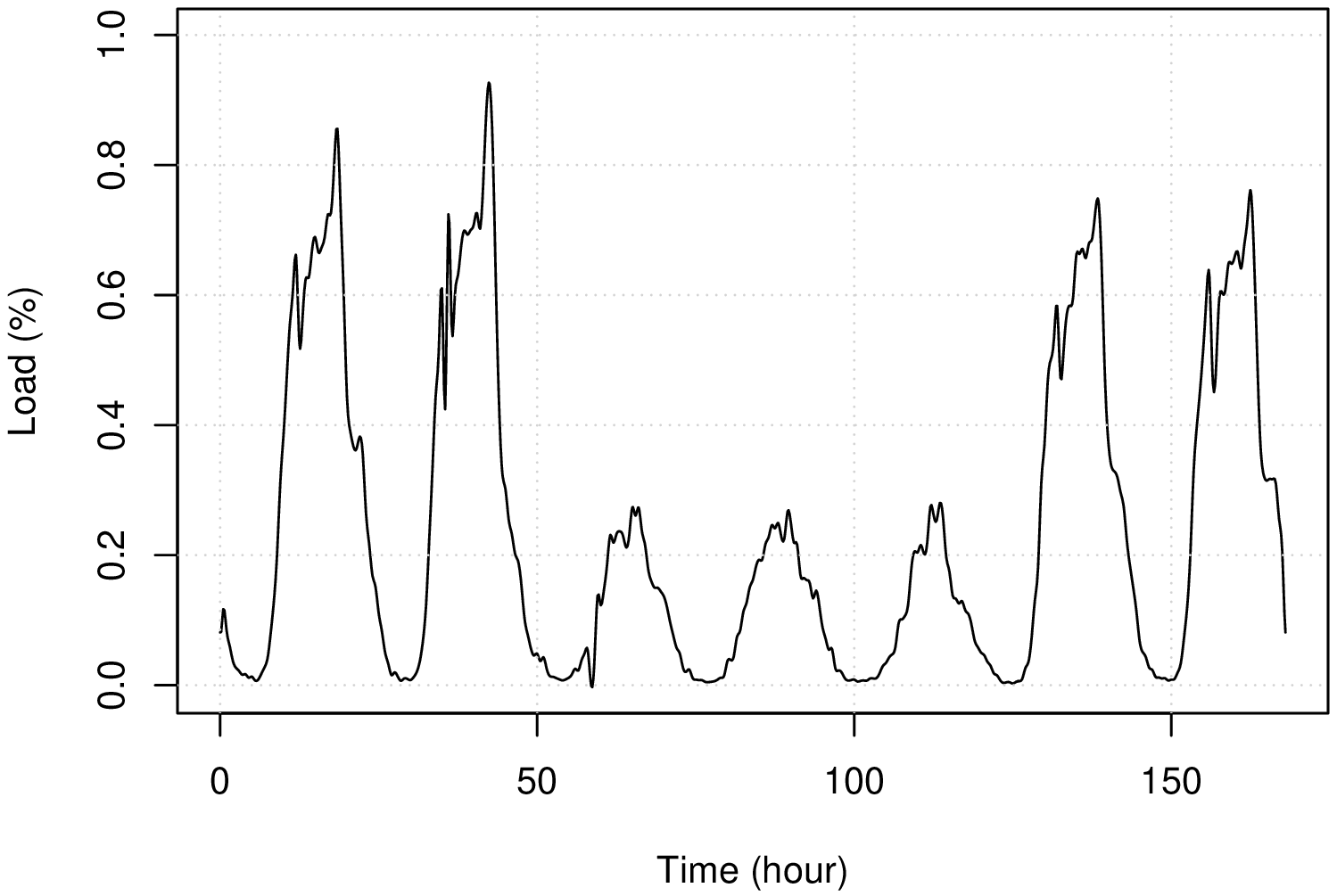}
}
\\
\subfloat[][Traffic load curve for BS $5$\label{fig:exp-load_4}]{
\centering
\includegraphics[scale=.30]{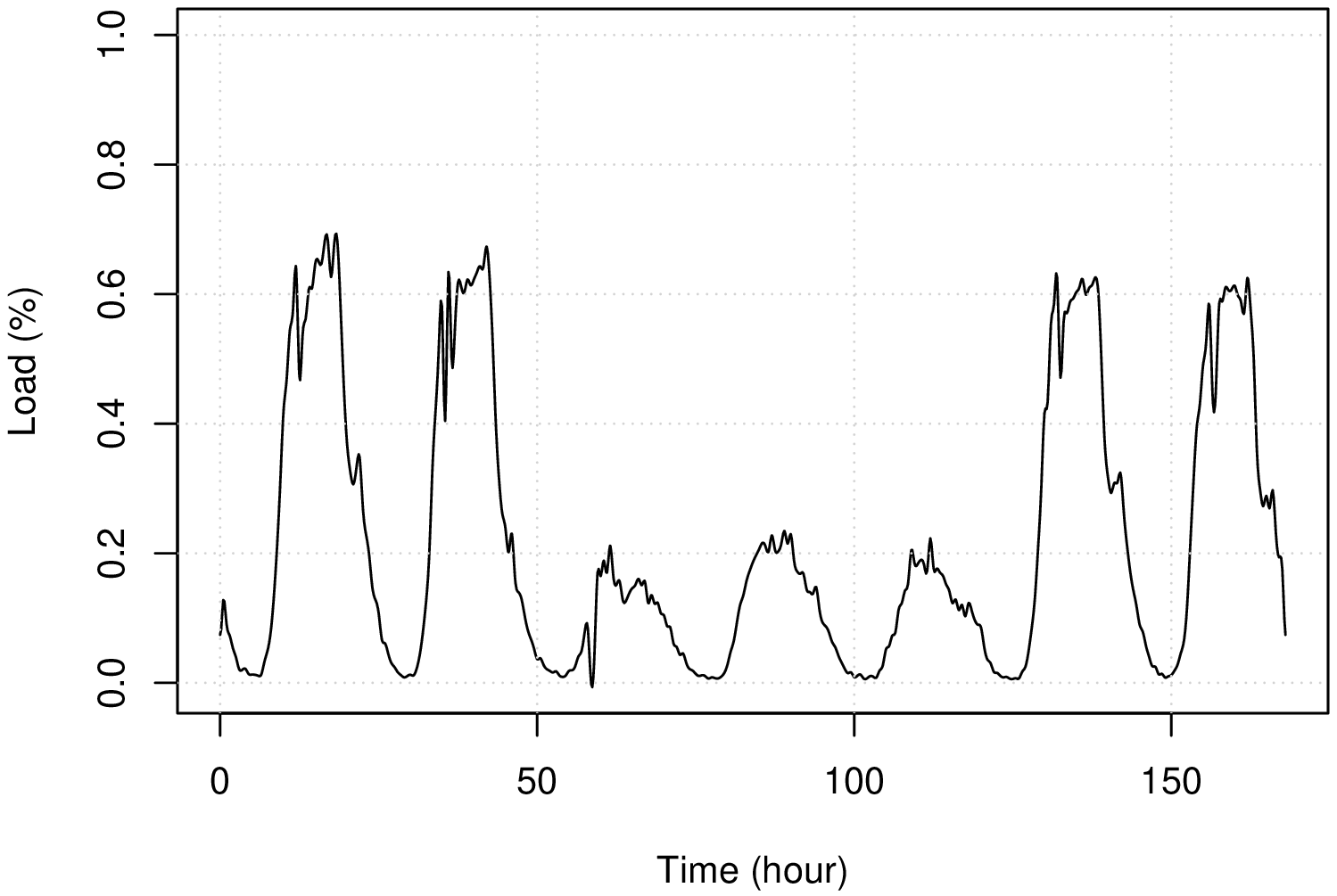}
}
\caption{Traffic load curves for the BSs of the experimental scenarios.}\label{fig:exp-load}
\end{figure}
We characterize each load profile by computing the overall load during the entire week as 
$\Gamma_i = \int_{0}^{168}l_i(t)\,dt, \forall i\in\{1,\ldots,5\}$, as well
as the average hourly load as $\bar{\ell_i} = \Gamma_i/168$, that are reported
in \dcsTabRef{load} (the upper integration limit corresponds to the number
of hours corresponding in a week).
\begin{table}
\centering
\caption{Load characteristics of each BS $i\in\mathcal{N}$.}\label{tbl:load}
\begin{tabular}{crr}
\toprule
BS & \multicolumn{1}{c}{$\Gamma_i$} & \multicolumn{1}{c}{$\bar{\ell_i}$} \\
\midrule
$1$	& $53.07$	& $0.316$ \\
$2$	& $37.17$	& $0.221$ \\
$3$	& $23.95$	& $0.143$ \\
$4$	& $40.26$	& $0.240$ \\
$5$	& $36.59$	& $0.218$ \\
\bottomrule
\end{tabular}
\end{table}

\subsection{Experimental Results}

To evaluate the performance of our algorithm, we compute 
the relative net profit increment $\mathit{RP}_i$ attained by each NO $i$,
that is defined as:
\begin{equation*}
\mathit{RP}_i = \sum_{k=1}^{A}\frac{x_i(\mathcal{S}^{(k)})}{P_i^{(k)}}-1
\end{equation*}
where $A=\lceil 168/\Delta t \rceil$ is the number of executions of the algorithm,
while $x_i(\mathcal{S}^{(k)})$ and $P_i^{(k)}$ correspond
to the payoff received by NO $i$ at the $k$-th algorithm
execution and the profit it would attain
if it worked alone (computed as in \dcsEqRef{profit}),
respectively.
Note that, in general, several Nash-stable coalitions may result
at each algorithm execution; in these cases, $x_i(\mathcal{S}^{(k)})$
is computed as the average of the payoffs yielded by all
the Nash-stable coalitions that may form.

To explain the results, we also compute, for each NO $i$, two additional
quantities, namely: 
\begin{itemize}
\item $\mathit{ON}_i$, the ratio of the number of times that BS $i$ is switched on
after an execution of the algorithm over the total number of algorithm
executions;
\item $\mathit{XL}_i$, the difference between the ratio of the total number of
users served by BS $i$ when working in a coalition over the total number of
users it would serve if it was working alone, and $1$; this quantity corresponds
to the relative deviation of load experienced by BS $i$ with respect to the case
it works alone, where a positive (negative) value represents an increment
(decrement) of load with respect to the case of working alone.
\end{itemize}

In the rest of this section, we discuss the impact of the electricity price
first (\dcsSecRef{enercost}), then we consider the effects of
the heterogeneity of user requirements (\dcsSecRef{userpop}),
then we evaluate the impact of the discretization step width
(\dcsSecRef{delta}), and finally we conclude with a discussion of our findings
(\dcsSecRef{exp-summary}).

\subsubsection{Impact of Energy Costs} \label{sec:enercost}

The energy cost $E_i$ obviously impacts on the net profit achieved by an NO $i$.
Intuitively, if energy is expensive,
NO $i$ may find it profitable to offload its users to other NOs, so that it can switch off
its BS.
Conversely, if energy is inexpensive, it may try to attract users
from other NOs.

To quantify the effects of energy price on the net profits achieved by NOs, 
we carry out experiments on a set of scenarios obtained by setting  
$E_i$ to either $E_\text{lo} = 0.12$ \$/kWh (which is a typical 
value of electricity cost in the US~\cite{EIA_EPM}) or $E_\text{hi} = 2\cdot E_\text{lo}= 0.24$ \$/kWh.
Because of space constraints, we discuss only the results corresponding
to four of the $32$ (i.e., $2^5$) distinct scenarios resulting from the
assignment of each cost value to each one of the NOs, as indicated 
in \dcsTabRef{energy-costs}.
\begin{table}
\centering
\caption{Electricity cost in the selected experimental scenarios.}\label{tbl:energy-costs}
\begin{scriptsize}
\begin{tabular}{crrrrr}
\toprule
Scenario & \multicolumn{5}{c}{Electricity Cost (\$/kWh)} \\
\cmidrule{2-6}
         & \multicolumn{1}{c}{NO $1$} & \multicolumn{1}{c}{NO $2$} & \multicolumn{1}{c}{NO $3$} & \multicolumn{1}{c}{NO $4$} & \multicolumn{1}{c}{NO $5$} \\
\midrule
$1$ & $0.12$ & $0.12$ & $0.12$ & $0.12$ & $0.12$ \\
$2$ & $0.24$ & $0.24$ & $0.24$ & $0.24$ & $0.24$ \\
$3$ & $0.12$ & $0.24$ & $0.24$ & $0.12$ & $0.12$ \\
$4$ & $0.12$ & $0.12$ & $0.12$ & $0.24$ & $0.24$ \\
\bottomrule
\end{tabular}
\end{scriptsize}
\end{table}

These four scenarios have been selected since they can be considered representative
of two opposite situations that may occur in practice: the first two scenarios 
(1 and 2) correspond indeed to standard situations where all BSs, being located in 
the same urban area, pay the same energy price, while the last two scenarios
(3 and 4) correspond to possible near-future scenarios where
different BSs can draw energy produced by either fossil fuel
or renewable sources~\cite{DeDomenico-2014-Enabling} and, as such, pay different
prices.
In these scenarios, all users require the same
minimum downlink bandwidth (i.e., $D_j=10$ Mbps) and 
generate the same revenue rate $R_j=0.07$ \$/hour (this value is
based on the $1$ GB/month \emph{Share-Everything} plan from \emph{Verizon Wireless} \cite{Verizon_ShareEverything}).
Furthermore, we assume that each NO executes an instance of \dcsAlgRef{hcf} at every hour (i.e., $\Delta t=1$ hour).
 
\dcsTabRef{p-energy-costs} shows the net profit increment $\mathit{RP_i}$ for
each NO $i$, while \dcsTabRef{on-xl-energy-costs} shows the corresponding
$\mathit{ON_i}$ and $\mathit{XL_i}$ values.
\begin{table}
\centering
\caption{Impact of energy price: net profit increments of the various NOs.}\label{tbl:p-energy-costs}
\begin{scriptsize}
\begin{tabular}{crrrrr}
\toprule
Scenario & \multicolumn{5}{c}{Net Profit Increment (\%)} \\
\cmidrule{2-6}
         & \multicolumn{1}{c}{NO $1$} & \multicolumn{1}{c}{NO $2$} & \multicolumn{1}{c}{NO $3$} & \multicolumn{1}{c}{NO $4$} & \multicolumn{1}{c}{NO $5$} \\
\midrule
$1$ & $12.10$ & $ 22.58$ & $ 40.78$  & $18.15$ & $20.67$ \\
$2$ & $39.78$ & $101.56$ & $481.44$  & $72.33$ & $90.03$ \\
$3$ & $15.31$ & $106.52$ & $504.60$  & $24.41$ & $28.37$ \\
$4$ & $15.18$ & $ 30.11$ & $ 57.08$  & $77.14$ & $96.24$ \\
\bottomrule
\end{tabular}
\end{scriptsize}
\end{table}

\begin{table}
\centering
\caption{Impact of energy price: $\mathit{ON}$ and $\mathit{XL}$ values (in \%) for the various NO.}\label{tbl:on-xl-energy-costs}
\begin{scriptsize}
\begin{tabular}{crrrrrrrrrrr}
\toprule
Scenario & \multicolumn{2}{c}{NO $1$} & \multicolumn{2}{c}{NO $2$} & \multicolumn{2}{c}{NO $3$} & \multicolumn{2}{c}{NO $4$} & \multicolumn{2}{c}{NO $5$} \\
         & $\mathit{ON}_1$ & $\mathit{XL}_1$ & $\mathit{ON}_2$ & $\mathit{XL}_2$ & $\mathit{ON}_3$ & $\mathit{XL}_3$ & $\mathit{ON}_4$ & $\mathit{XL}_4$ & $\mathit{ON}_5$ & $\mathit{XL}_5$ \\
\midrule
$1$ & $92.48$ & $60.44$ & $46.85$ & $ 33.45$ & $19.79$ & $-30.96$ & $22.03$ & $-47.55$ & $22.70$ & $-42.04$ \\
$2$ & $90.60$ & $60.43$ & $42.84$ & $ 26.00$ & $25.89$ & $- 9.80$ & $14.02$ & $-63.85$ & $23.38$ & $-32.16$ \\
$3$ & $89.88$ & $62.45$ & $ 8.63$ & $-84.90$ & $ 9.52$ & $-75.98$ & $41.67$ & $ 13.41$ & $50.00$ & $ 35.47$ \\
$4$ & $94.05$ & $76.82$ & $39.29$ & $ 23.78$ & $48.21$ & $ 56.80$ & $13.99$ & $-75.27$ & $ 5.65$ & $-87.18$ \\
\bottomrule
\end{tabular}
\end{scriptsize}
\end{table}

Let us start with scenarios $1$ and $2$.
As can be seen from the corresponding rows in \dcsTabRef{p-energy-costs},
NO $1$ and NO $3$ achieve the lowest and the
highest net profit increase, respectively, in both scenarios.
The corresponding $\mathit{ON}_i$ and $\mathit{XL}_i$ values provide an explanation 
of these facts.
Indeed, while NO $3$ is able to switch off its BS most
of the times ($\mathit{ON}_3$ is no larger than $26\%$), thus reducing its
energy cost,  NO $1$ keeps it switched on 
for $90\%$ of the times or more, thus incurring into an high energy cost.
This, in turn, is due to the characteristics of the respective load
profiles: while NO $1$ most of the times has too many clients to be able
to offload all of them to other NOs, NO $3$ is in the opposite situation.
As a consequence, NO $1$ is able to accept users coming from other NOs
($\mathit{XL}_1$ is larger than $60$\%), but the extra revenues they generate
is in large part elided by its energy costs.
Finally, the profit increments of the other NOs (namely, $2$, $4$, and $5$)
fall in between the above two extremes: they are indeed able to offload
their users more often than NO $1$ (they have a lower load), but not
as often as NO $3$ (their load is higher), as indicated by
\dcsTabRef{on-xl-energy-costs}.

This phenomenon occurs also in scenarios $3$ and $4$, where
we observe  that the largest net profit increases
are achieved by the NOs that are associated with the highest
energy costs (NOs $2$ and $3$ in scenario $3$, and NO $4$ and $5$ in
scenario $4$).
As indicated by \dcsTabRef{on-xl-energy-costs},
these NOs are indeed often able to offload all their
users to other BSs (such a frequency depends on the respective load
profiles) so that they can frequently switch 
off their BSs, and the cost savings they achieve are significant given their
higher energy costs.
Again, as in scenarios $1$ and $2$, NO $1$ gets the lower profit increase, since its
ability to switch its BS off remains lower than the other players.

\subsubsection{Impact of User Heterogeneity} \label{sec:userpop}

In the previous set of experiments, all the users were assumed
to request the same minimum downlink data rate.
However, in realistic settings, it is reasonable to expect that
users with different requirements and revenues co-exist in the same area.

In order to study the impact of the composition of the user population on the
ability of our algorithm to yield satisfactory results, we carry out
a set of experiments in which, for each one of the scenarios
listed in \dcsTabRef{energy-costs}, users are partitioned
in equal proportions
into three classes, each one characterized by a different values of the
minimum downlink data rate and revenue rate (as reported in \dcsTabRef{users}).
As in \dcsSecRef{enercost}, we assume that each NO executes an instance of \dcsAlgRef{hcf} at every hour (i.e., $\Delta t=1$ hour).
\begin{table}
\centering
\caption{Minimum downlink data rates and revenue rates for the different user classes.}\label{tbl:users}
\begin{scriptsize}
\begin{tabular}{crcr}
\toprule
User Class & \multicolumn{1}{c}{$D_j$ (Mbps)} & Traffic Type & \multicolumn{1}{c}{$R_j$ (\$/hour)} \\
\midrule
Base	   & $0.0122$                   & Voice        & $0.0175$ \\
Standard   & $0.384$                    & 3GPP         & $0.035$ \\
Premium	   & $10$                       & HSPA         & $0.07$ \\
\bottomrule
\end{tabular}
\end{scriptsize}
\end{table}

\begin{table}
\centering
\caption{Impact of user heterogeneity: net profit increments of the varios NOs.}\label{tbl:p-heter}
\begin{scriptsize}
\begin{tabular}{crrrrr}
\toprule
Scenario & \multicolumn{5}{c}{Net Profit Increment (\%)} \\
\cmidrule{2-6}
         & \multicolumn{1}{c}{NO $1$} & \multicolumn{1}{c}{NO $2$} & \multicolumn{1}{c}{NO $3$} & \multicolumn{1}{c}{NO $4$} & \multicolumn{1}{c}{NO $5$} \\
\midrule
$1$ & $ 6.23$ & $10.96$ & $19.84$ & $ 9.46$ & $10.77$ \\
$2$ & $16.90$ & $33.94$ & $78.29$ & $28.02$ & $33.82$ \\
$3$ & $ 7.95$ & $36.06$ & $81.43$ & $12.48$ & $14.84$ \\
$4$ & $ 7.88$ & $14.65$ & $28.36$ & $30.11$ & $36.05$ \\
\bottomrule
\end{tabular}
\end{scriptsize}
\end{table}

\dcsTabRef{p-heter} shows the net profit increment $\mathit{RP_i}$ for
each NO $i$.
As for the case of homogeneous users, we observe that 
in scenarios $1$ and $2$, the lowest and the
highest net profit increases are achieved by NO $1$ (i.e., the NO with the
heaviest loaded BS) and NO $3$ (i.e., the NO with the lightest loaded BS),
respectively, while, in scenarios $3$ and $4$, the largest net profit increments
are reached by the NOs that are associated with the highest energy costs (i.e.,
NOs $2$ and $3$ in scenario $3$, and NOs $4$ and $5$ in scenario $4$).
The explanation of this fact is the same given for the homogeneous users
case, so we do not repeat it here.
 
Furthermore, while the composition of user population has no appreciable
effects on the choices taken by individual NOs, it has an
evident effect on the net profit increase attained by each NO.
As can be indeed seen from \dcsTabRef{p-heter}, the $\mathit{RP}$ values
are lower than those achieved by each NO in the same scenario when
users are homogeneous (reported in \dcsTabRef{p-energy-costs}).
This is not unexpected, given the lower average revenue rate brought
by each user ($0.0356$ \$/hour versus $0.07$ \$/hour).
However, we also note that these increases remain significant.

\subsubsection{Impact of $\mathbf{\Delta t}$} \label{sec:delta}

The final set of results we comment are concerned with the
behavior of the algorithm when the discretization step $\Delta t$ increases,
i.e., when its frequency of activation decreases.
This parameter has an evident impact on the solutions computed by
the algorithm, since the larger its value, the coarser the approximation
of the load value used to compute the optimal coalition load cost $Q(\cdot)$,
and the lower the impact of the coalition formation cost $K(\cdot)$.

To quantify this impact, we run several experiments in which we progressively
increase $\Delta t$.
Because of space constraints, here we discuss only the results
obtained for $\Delta t = 2, 4, 6$ hours for the same scenarios and user population
considered in \dcsSecRef{enercost}, that are reported in \dcsFigRef{exp-deltat}.
The figure shows for each value of $\Delta t$ (in the $x$-axis) the
corresponding $\mathit{RP}$ values of the various NOs (denoted as
$\mathit{RP}(\Delta t)$, in the $y$-axis).
Also, for the sake of comparability, in the same figure, we report the
$\mathit{RP}$ values obtained for $\Delta t=1$ hour.

\begin{figure}
\centering
\subfloat[][Scenario $1$\label{fig:exp-deltat_s1}]{
\centering
\includegraphics[scale=.40]{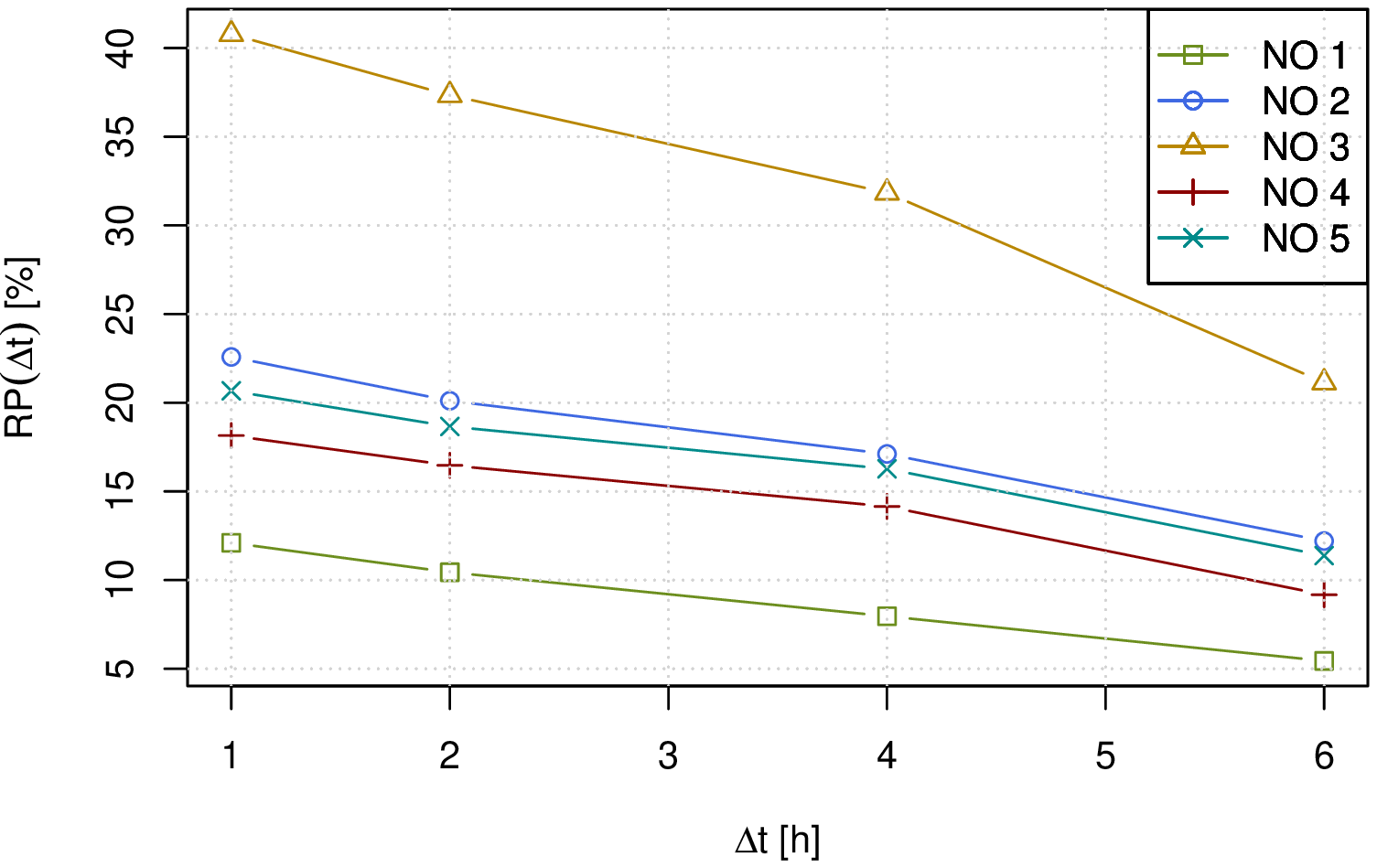}
}
~
\subfloat[][Scenario $2$\label{fig:exp-deltat_s2}]{
\centering
\includegraphics[scale=.40]{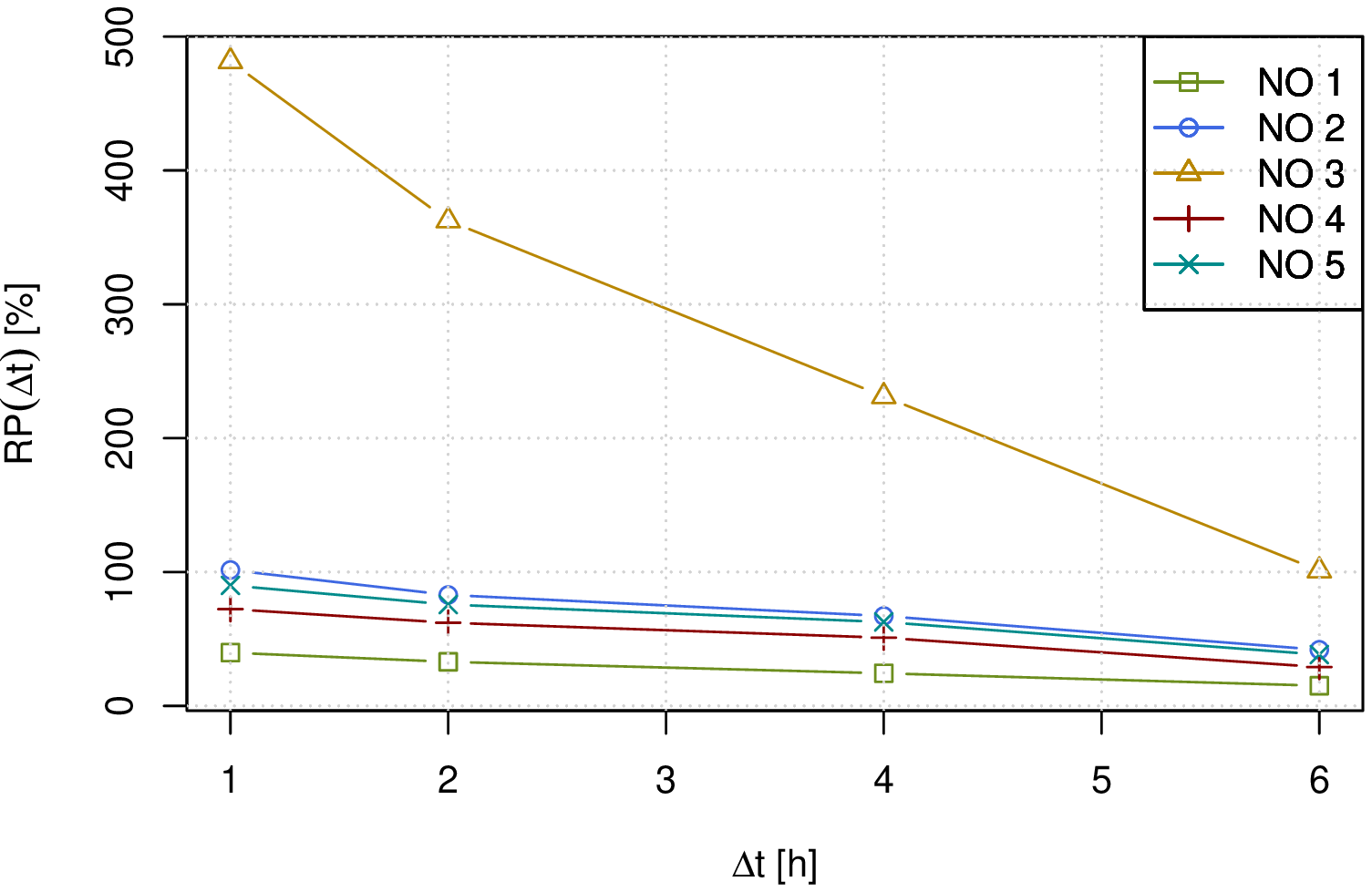}
}
\\
\subfloat[][Scenario $3$\label{fig:exp-deltat_s3}]{
\centering
\includegraphics[scale=.40]{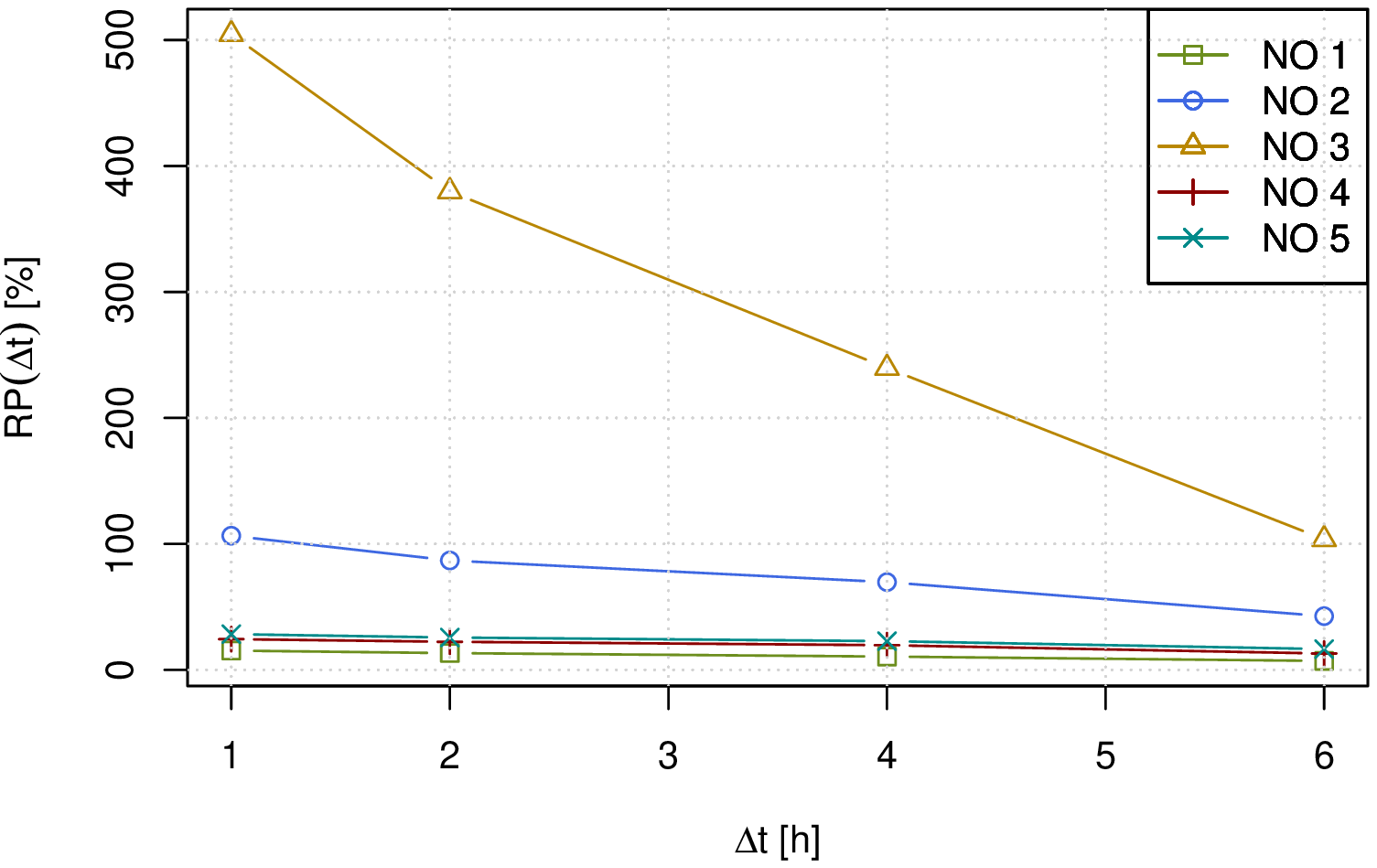}
}
~
\subfloat[][Scenario $4$\label{fig:exp-deltat_s4}]{
\centering
\includegraphics[scale=.40]{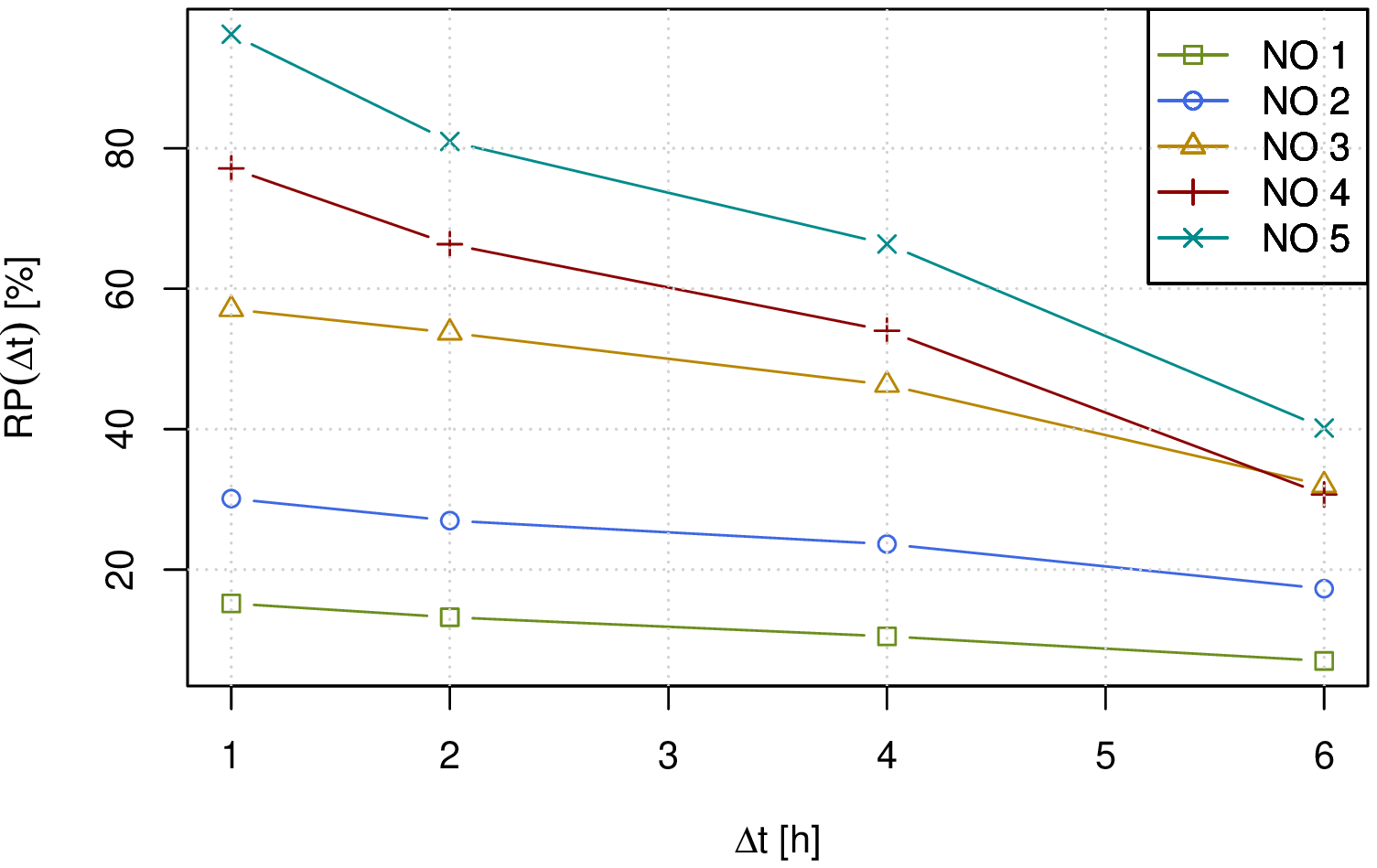}
}
\caption{Impact of $\Delta t$: $\mathit{RP}$ values for the various NOs and for $\Delta t=1,2,4,6$.}\label{fig:exp-deltat}
\end{figure}

As can be seen from \dcsFigRef{exp-deltat}, the larger $\Delta t$, the lower the net profit
increase attained by each NO for a given scenario with respect to 
when $\Delta t = 1$.

As previously pointed out for the case of $\Delta t=1$, the energy cost plays an
important role in the achieved net profit increments.
This still holds for larger value of $\Delta t$ since, as can be noted in
\dcsFigRef{exp-deltat}, the decrease of the $\mathit{RP}$ values, for
increasing values of $\Delta t$, is as much larger as the higher is the energy
cost.
For instance, this can be observed by comparing the $\mathit{RP}$ values obtained
in scenarios $1$ (see \dcsFigRef{exp-deltat_s1}) and $2$ (see
\dcsFigRef{exp-deltat_s2}), whereby the electricity cost double.
Here, we can note that as $\Delta t$ increases, the $\mathit{RP}$ values drops by
a factor of about $2$ in scenario $1$, and by a factor of nearly $2.5$ but that can
reach a factor of about $4.8$ for NO $3$, in scenario $2$.
A similar situation happens in scenarios $3$ and $4$, where the NOs that experience
the higher decrease of their $\mathit{RP}$ values are the ones with associated the
higher electricity cost, namely NOs $2$ and $3$ in scenario $3$ (see
\dcsFigRef{exp-deltat_s3}) and NOs $4$ and $5$ in scenario $4$ (see
\dcsFigRef{exp-deltat_s4}).

\subsubsection{Discussion} \label{sec:exp-summary}

From the results obtained in the experimental evaluation we can conclude that:
\begin{itemize}
\item energy costs greatly influence the coalition formation and the net profit
increments achieved by NOs;
\item NOs with higher energy costs are more motivated to join a coalition
since they can offload their users to NOs with lower energy costs, thus allowing
to switch off their BSs and hence to achieve a higher net profit increment;
\item NOs with heavier load are motivated to join a coalition as well, since
they can host users of other NOs, thus amortizing their energy costs;
\item the composition of user population has no effect on the choices made
by the various NOs, while -- if the revenue associated with each user is 
directly proportional to its minimum downlink data rate -- it has a strong
impact on the net profit increment;
\item the discretization step $\Delta t$ has a significant impact on the coalition
formation process, as the larger its value, the lower the net
profit increment;
\item when choosing the discretization step $\Delta t$, one have to also take
into account the impact of the energy cost as, the larger is $\Delta t$ the higher
is the (negative) impact of the energy cost on the achieved net profit increment.
\end{itemize}

\section{Related Works} \label{sec:related}

The problem of increasing the profit of NOs in
cellular wireless networks has been already studied in the literature,
where several papers on this topic have been published.

However, to the best of our knowledge, the problem of forming multiple stable
coalitions of BSs, in order to reduce energy consumption and to increase the
profit of different and selfish NOs, has never been tackled before.
In our work, we pursue this problem by proposing a novel approach based on
mathematical optimization and on the coalition formation game theory.

Much of current research focuses indeed on energy saving techniques as a way to
reduce NO costs \cite{DeDomenico-2014-Enabling}.
For instance, works like \cite{AjmoneMarsan-2009-Optimal,Ghazzai-2012-Optimized,Han-2013-Energy,Hasan-2013-Coalitional}
use optimization techniques and traffic profile patterns to determine when and
where to switch off BSs,
while those like \cite{Niu-2010-Cell} use cell-zooming techniques to
adaptively adjust the cell size according to traffic load and to possibly
switch off inactive cells.
Other works, like \cite{Hsu-2014-Optimizing}, focus on techniques to improve
spectrum efficiency in order to enhance the utilization of data subcarriers and
thus to better amortize the license costs of frequency bands.
These techniques, however, do not consider the cooperation among different NOs,
and therefore are unable, unlike our approach, to exploit the advantages brought
by such a cooperation. Furthermore, they do not jointly tackle the problems of
ensuring QoS to users and of reducing energy consumption. 

Approaches attempting to jointly achieve QoS and energy savings have been
recently proposed \cite{Boiardi-2013-Radio,Antoniou-2009-Access,Yaacoub-2013-Game,Zhu-2014-Pricing}.

In \cite{Boiardi-2013-Radio}, a static joint planning and management optimization
approach to limit energy consumption (by switching BSs on and off according to
the traffic load) while guaranteeing QoS and minimizing NO costs is proposed.
This approach, however, is inherently static (it operates at network design
time) so, unlike ours, is unable to operate in a dynamic environment.

In \cite{Antoniou-2009-Access}, the authors present a cooperative
game-theoretic approach, in which individual access networks with insufficient
resources join to form the grand-coalition in order to satisfy service demands.
This proposal is unable to support non trivial scenarios featuring multiple
NOs, a time-varying number of connected users, the energy consumption and the costs due to coalition
formation and operation, that may prevent the formation of
the grand coalition in favor of smaller and more stable coalitions.
Conversely, these scenarios are properly dealt with by our work.

In \cite{Yaacoub-2013-Game}, a game-theoretic approach for the
energy-efficient operation of heterogeneous LTE cellular networks, belonging to
a single NO, is proposed.
This approach, however, does not guarantee the stability of the
coalitions that are formed, while stability is a core property of
our solution. Furthermore, it is unable to deal with complex scenarios
featuring multiple NOs possibly exhibiting different energy prices, and
a time-varying population of users.
In contrast, our algorithm is able to deal with the above scenarios,
and always yield stable coalitions.

In \cite{Zhu-2014-Pricing}, the authors propose a hierarchical dynamic game
framework to increase the capacity of two-tier cellular networks by offloading
traffic from macro cells to small cells.
This work does not take into account the opportunity to selectively switch off
underutilized BSs thus offloading the related users to the remaining switched-on
BSs.
Therefore, our approach can be considered complementary to this one,
as it is able to provide a profitable way to select
what macro cells to consider before applying the proposed hierarchical game.

\section{Conclusion and Future Works} \label{sec:concl}

This paper presents a novel dynamic cooperation scheme among a group of cellular
NOs to achieve profit maximization.
We propose a cooperative game-theoretic framework to study the problem of
forming stable coalitions among NOs, and a mathematical optimization model to
allocate users to a set of BSs, in order to reduce NO costs and, at the same
time, to meet user QoS.

Our solution adopts a distributed approach in which the best solution arises
without the need to synchronize the various NOs or to resort to a trusted third
party, and such that no NO can benefit by moving from its coalition to
another (possibly empty) one.

In the proposed scheme, we model the cooperation among the NOs as a coalition
game with transferable utility and we devise a hedonic shift algorithm
to form stable coalitions.
With our algorithm, each NO autonomously and selfishly decides whether
to leave the current coalition to join a different one according to his
preference, meanwhile improving its perceived net profit.
We prove that the proposed algorithm converges to a Nash-stable partition which
determines the resulting coalition structure.
Our algorithm can be readily implemented in a distributed fashion, given that
each NO can act independently and asynchronously from any other NO in the system.
Furthermore, the asynchronicity of our algorithm makes it able to adapt to
environmental changes (like new user arrivals).

To prove the effectiveness of our approach, we perform a thorough numerical
evaluation by means of trace-based simulation, using realistic scenarios and
real-world traffic data.
To evaluate the performance of our algorithms, we vary, in a controlled way, the
values of its input parameters, like the energy cost and the user heterogeneity.

The future developments of this research is following several directions.
Firstly, we want to extend our work to include cellular networks sparsely
deployed in wider geographical areas.
To do so, we will have to take into account several issues, like the network
coverage problem, whereby a BS can be switched off only if a group of neighboring
BSs can cover the area it serves.

As as second research direction, we want to explore different variants of our
algorithm, especially suited for large cellular networks.
Specifically, when the number of BSs increases, the time to convergence of our
algorithm may be too long for practical uses, especially for small values of the
discretization step.
In these cases, it would be better to renounce to the quality of the obtained
solution in favor of a more readily available solution.
Our current algorithm always provides the best solution (i.e., a Nash-stable
partition), but at the cost of visiting, in the worst case, all possible
coalitions.
To this end, we want to design an \emph{anytime} version of our algorithm (i.e.,
an algorithm which can return a -- possibly suboptimal -- solution any time) and
we want to compare its performance with the current one.

Finally, we would like to extend our work to include the cooperation between
BSs and their users, in order to improve the energy reduction of BSs and the
quality of experience of connected users.

\bibliographystyle{model1-num-names}

\begin{thebibliography}{40}
\expandafter\ifx\csname natexlab\endcsname\relax\def\natexlab#1{#1}\fi
\providecommand{\url}[1]{\texttt{#1}}
\providecommand{\href}[2]{#2}
\providecommand{\path}[1]{#1}
\providecommand{\DOIprefix}{doi:}
\providecommand{\ArXivprefix}{arXiv:}
\providecommand{\URLprefix}{URL: }
\providecommand{\Pubmedprefix}{pmid:}
\providecommand{\doi}[1]{\href{http://dx.doi.org/#1}{\path{#1}}}
\providecommand{\Pubmed}[1]{\href{pmid:#1}{\path{#1}}}
\providecommand{\bibinfo}[2]{#2}
\ifx\xfnm\relax \def\xfnm[#1]{\unskip,\space#1}\fi
\bibitem[{{Global Action Plan}(2007)}]{GAP-07}
\bibinfo{author}{{Global Action Plan}}, \bibinfo{title}{An inefficient truth},
  \bibinfo{howpublished}{Online: \url{http://www.globalactionplan.org.uk}},
  \bibinfo{year}{2007}.
\bibitem[{{The Climate Group} and {GeSI}(2008)}]{ClimateGroup-2008-Smart2020}
\bibinfo{author}{{The Climate Group}}, \bibinfo{author}{{GeSI}},
  \bibinfo{title}{{SMART} 2020: Enabling the low carbon economy in the
  information age}, \bibinfo{howpublished}{Online:
  \url{http://www.theclimategroup.org/\_assets/files/Smart2020Report.pdf}},
  \bibinfo{year}{2008}.
\bibitem[{Vereecken et~al.(2011)Vereecken, Heddeghem, Deruyck, Puype, Lannoo,
  Joseph, Colle, Martens, and Demeester}]{Vereecken-2011-Power}
\bibinfo{author}{W.~Vereecken}, \bibinfo{author}{W.~V. Heddeghem},
  \bibinfo{author}{M.~Deruyck}, \bibinfo{author}{B.~Puype},
  \bibinfo{author}{B.~Lannoo}, \bibinfo{author}{W.~Joseph},
  \bibinfo{author}{D.~Colle}, \bibinfo{author}{L.~Martens},
  \bibinfo{author}{P.~Demeester},
\newblock \bibinfo{title}{Power consumption in telecommunication networks:
  overview and reduction strategies},
\newblock \bibinfo{journal}{IEEE Comm Mag} \bibinfo{volume}{49}
  (\bibinfo{year}{2011}) \bibinfo{pages}{62--69}.
\bibitem[{Oh et~al.(2011)Oh, Krishnamachari, Liu, and Niu}]{Oh-2011-Toward}
\bibinfo{author}{E.~Oh}, \bibinfo{author}{B.~Krishnamachari},
  \bibinfo{author}{X.~Liu}, \bibinfo{author}{Z.~Niu},
\newblock \bibinfo{title}{Toward dynamic energy-efficient operation of cellular
  network infrastructure},
\newblock \bibinfo{journal}{IEEE Comm Mag} \bibinfo{volume}{49}
  (\bibinfo{year}{2011}) \bibinfo{pages}{56--61}.
\bibitem[{Domenico et~al.(2014)Domenico, Strinati, and
  Capone}]{DeDomenico-2014-Enabling}
\bibinfo{author}{A.~D. Domenico}, \bibinfo{author}{E.~C. Strinati},
  \bibinfo{author}{A.~Capone},
\newblock \bibinfo{title}{Enabling green cellular networks: A survey and
  outlook},
\newblock \bibinfo{journal}{Comput Comm} \bibinfo{volume}{37}
  (\bibinfo{year}{2014}) \bibinfo{pages}{5--24}.
\bibitem[{Auer et~al.(2011)Auer, Giannini, Desset, Godor, Skillermark, Olsson,
  Imran, Sabella, Gonzalez, Blume, and Fehske}]{Auer2011}
\bibinfo{author}{G.~Auer}, \bibinfo{author}{V.~Giannini},
  \bibinfo{author}{C.~Desset}, \bibinfo{author}{I.~Godor},
  \bibinfo{author}{P.~Skillermark}, \bibinfo{author}{M.~Olsson},
  \bibinfo{author}{M.~Imran}, \bibinfo{author}{D.~Sabella},
  \bibinfo{author}{M.~Gonzalez}, \bibinfo{author}{O.~Blume},
  \bibinfo{author}{A.~Fehske},
\newblock \bibinfo{title}{How much energy is needed to run a wireless
  network?},
\newblock \bibinfo{journal}{IEEE Wireless Comm Mag} \bibinfo{volume}{18}
  (\bibinfo{year}{2011}) \bibinfo{pages}{40--49}.
\bibitem[{Peng et~al.(2011)Peng, Lee, Lu, Luo, and Li}]{Peng-2011-Traffic}
\bibinfo{author}{C.~Peng}, \bibinfo{author}{S.-B. Lee},
  \bibinfo{author}{S.~Lu}, \bibinfo{author}{H.~Luo}, \bibinfo{author}{H.~Li},
\newblock \bibinfo{title}{Traffic-driven power saving in operational 3g
  cellular networks},
\newblock in: \bibinfo{booktitle}{Proc. of the $17^\text{th}$ Annual Int.
  Conference on Mobile Computing and Networking (MobiCom)},
  \bibinfo{publisher}{ACM}, \bibinfo{year}{2011}, pp.
  \bibinfo{pages}{121--132}.
\bibitem[{Micallef et~al.(2010)Micallef, Mogensen, and
  Scheck}]{Micallef-2010-Cell}
\bibinfo{author}{G.~Micallef}, \bibinfo{author}{P.~Mogensen},
  \bibinfo{author}{H.-O. Scheck},
\newblock \bibinfo{title}{Cell size breathing and possibilities to introduce
  cell sleep mode},
\newblock in: \bibinfo{booktitle}{Proc. of the European Wireless Conference
  (EW)}, \bibinfo{year}{2010}, pp. \bibinfo{pages}{111--115}.
\bibitem[{Niu et~al.(2010)Niu, Wu, Gong, and Yang}]{Niu-2010-Cell}
\bibinfo{author}{Z.~Niu}, \bibinfo{author}{Y.~Wu}, \bibinfo{author}{J.~Gong},
  \bibinfo{author}{Z.~Yang},
\newblock \bibinfo{title}{Cell zooming for cost-efficient green cellular
  networks},
\newblock \bibinfo{journal}{IEEE Comm Mag}  (\bibinfo{year}{2010}).
\bibitem[{Peleg and Sudh{\"{o}}lter(2007)}]{Peleg-2007-CooperativeGames}
\bibinfo{author}{B.~Peleg}, \bibinfo{author}{P.~Sudh{\"{o}}lter},
  \bibinfo{title}{Introduction to the Theory of Cooperative Games},
  \bibinfo{edition}{$2^\text{nd}$} ed., \bibinfo{publisher}{Springer Berlin
  Heidelberg}, \bibinfo{year}{2007}.
\bibitem[{Dr{\`e}ze and Greenberg(1980)}]{DrezeGreenberg1980}
\bibinfo{author}{J.~Dr{\`e}ze}, \bibinfo{author}{J.~Greenberg},
\newblock \bibinfo{title}{Hedonic coalitions: Optimality and stability},
\newblock \bibinfo{journal}{Econometrica} \bibinfo{volume}{48}
  (\bibinfo{year}{1980}) \bibinfo{pages}{987--1003}.
\bibitem[{Bogomolnaia and Jackson(2002)}]{BOGOMONLAIA-JACKSON_2002}
\bibinfo{author}{A.~Bogomolnaia}, \bibinfo{author}{M.~Jackson},
\newblock \bibinfo{title}{{The Stability of Hedonic Coalition Structures}},
\newblock \bibinfo{journal}{Game Econ Behav} \bibinfo{volume}{38}
  (\bibinfo{year}{2002}) \bibinfo{pages}{201--230}.
\bibitem[{{Ajmone Marsan} et~al.(2012){Ajmone Marsan}, Chiaraviglio, Ciullo,
  and Meo}]{AjmoneMarsan-2012-Multiple}
\bibinfo{author}{M.~{Ajmone Marsan}}, \bibinfo{author}{L.~Chiaraviglio},
  \bibinfo{author}{D.~Ciullo}, \bibinfo{author}{M.~Meo},
\newblock \bibinfo{title}{Multiple daily base station switch-offs in cellular
  networks},
\newblock in: \bibinfo{booktitle}{Proc. of the $4^\text{th}$ Int. Conference on
  Communications and Electronics (ICCE)}, \bibinfo{year}{2012}, pp.
  \bibinfo{pages}{245--250}.
\bibitem[{Pollakis et~al.(2012)Pollakis, Cavalcante, and
  Sta{\'n}czak}]{Pollakis-2012-Base}
\bibinfo{author}{E.~Pollakis}, \bibinfo{author}{R.~L. Cavalcante},
  \bibinfo{author}{S.~Sta{\'n}czak},
\newblock \bibinfo{title}{Base station selection for energy efficient network
  operation with the majorization-minimization algorithm},
\newblock in: \bibinfo{booktitle}{Proc. of the $13^\text{th}$ IEEE Int.
  Workshop on Signal Processing Advances in Wireless Communications (SPAWC)},
  \bibinfo{year}{2012}.
\bibitem[{Deruyck et~al.(2012)Deruyck, Tanghe, Joseph, and
  Martens}]{Deruyck-2012-Characterization}
\bibinfo{author}{M.~Deruyck}, \bibinfo{author}{E.~Tanghe},
  \bibinfo{author}{W.~Joseph}, \bibinfo{author}{L.~Martens},
\newblock \bibinfo{title}{Characterization and optimization of the power
  consumption in wireless access networks by taking daily traffic variations
  into account},
\newblock \bibinfo{journal}{EURASIP J Wireless Comm Networking}
  \bibinfo{volume}{2012} (\bibinfo{year}{2012}) \bibinfo{pages}{1--12}.
\bibitem[{Lorincz et~al.(2012)Lorincz, Garma, and
  Petrovic}]{Lorincz-2012-Measurements}
\bibinfo{author}{J.~Lorincz}, \bibinfo{author}{T.~Garma},
  \bibinfo{author}{G.~Petrovic},
\newblock \bibinfo{title}{{Measurements and Modelling of Base Stations Power
  Consumption under Real Traffic Loads}},
\newblock \bibinfo{journal}{Sensors} \bibinfo{volume}{12}
  (\bibinfo{year}{2012}) \bibinfo{pages}{4281--4310}.
\bibitem[{Conte et~al.(2011)Conte, Feki, Chiaraviglio, Ciullo, Meo, and {Ajmone
  Marsan}}]{Conte-2011-Cell}
\bibinfo{author}{A.~Conte}, \bibinfo{author}{A.~Feki},
  \bibinfo{author}{L.~Chiaraviglio}, \bibinfo{author}{D.~Ciullo},
  \bibinfo{author}{M.~Meo}, \bibinfo{author}{M.~{Ajmone Marsan}},
\newblock \bibinfo{title}{Cell wilting and blossoming for energy efficiency},
\newblock \bibinfo{journal}{IEEE Wireless Comm Mag} \bibinfo{volume}{18}
  (\bibinfo{year}{2011}) \bibinfo{pages}{50--57}.
\bibitem[{Willkomm et~al.(2008)Willkomm, Machiraju, Bolot, and
  Wolisz}]{Willkomm-2008-Primary}
\bibinfo{author}{D.~Willkomm}, \bibinfo{author}{S.~Machiraju},
  \bibinfo{author}{J.~Bolot}, \bibinfo{author}{A.~Wolisz},
\newblock \bibinfo{title}{Primary users in cellular networks: A large-scale
  measurement study},
\newblock in: \bibinfo{booktitle}{Proc. of the $3^\text{rd}$ IEEE Symposium on
  New Frontiers in Dynamic Spectrum Access Networks (DySPAN)},
  \bibinfo{year}{2008}, pp. \bibinfo{pages}{1--11}.
\bibitem[{Ray(2007)}]{Book_RAY2007}
\bibinfo{author}{D.~Ray}, \bibinfo{title}{{A Game-Theoretic Perspective on
  Coalition Formation}}, The Lipsey Lectures, \bibinfo{publisher}{Oxford
  University Press}, \bibinfo{year}{2007}.
\bibitem[{Shapley(1953)}]{SHAPLEY_53}
\bibinfo{author}{L.~S. Shapley},
\newblock \bibinfo{title}{{A Value for $n$-person Games}},
\newblock in: \bibinfo{editor}{H.~Kuhn}, \bibinfo{editor}{A.~Tucker} (Eds.),
  \bibinfo{booktitle}{{Contributions to the Theory of Games}},
  \bibinfo{publisher}{Princeton University Press}, \bibinfo{year}{1953}, pp.
  \bibinfo{pages}{307--317}.
\bibitem[{Aumann and Dr{\'e}ze(1974)}]{Aumann-1974-Cooperative}
\bibinfo{author}{R.~Aumann}, \bibinfo{author}{J.~Dr{\'e}ze},
\newblock \bibinfo{title}{Cooperative games with coalition structures},
\newblock \bibinfo{journal}{Int J Game Theor} \bibinfo{volume}{3}
  (\bibinfo{year}{1974}) \bibinfo{pages}{217--237}.
\bibitem[{Saad et~al.(2011)Saad, Han, Ba{\c{s}}ar, Debbah, and
  Hj{\o}rungnes}]{Saad-2011-Hedonic}
\bibinfo{author}{W.~Saad}, \bibinfo{author}{Z.~Han},
  \bibinfo{author}{T.~Ba{\c{s}}ar}, \bibinfo{author}{M.~Debbah},
  \bibinfo{author}{A.~Hj{\o}rungnes},
\newblock \bibinfo{title}{Hedonic coalition formation for distributed task
  allocation among wireless agents},
\newblock \bibinfo{journal}{IEEE Trans Mobile Comput} \bibinfo{volume}{10}
  (\bibinfo{year}{2011}) \bibinfo{pages}{1327--1344}.
\bibitem[{Kshemkalyani and Singhal(2008)}]{Kshemkalyani-2008-Distributed}
\bibinfo{author}{A.~Kshemkalyani}, \bibinfo{author}{M.~Singhal},
  \bibinfo{title}{{Distributed Computing: Principles, Algorithms, and
  Systems}}, \bibinfo{publisher}{Cambridge University Press},
  \bibinfo{year}{2008}.
\bibitem[{Weiss(2013)}]{Weiss-2013-MAS}
\bibinfo{editor}{G.~Weiss} (Ed.), \bibinfo{title}{Multiagent Systems},
  \bibinfo{edition}{$2^\text{nd}$} ed., \bibinfo{publisher}{MIT Press},
  \bibinfo{year}{2013}.
\bibitem[{Ballester(2004)}]{Ballester-2004-NP}
\bibinfo{author}{C.~Ballester},
\newblock \bibinfo{title}{{NP}-completeness in hedonic games},
\newblock \bibinfo{journal}{Game Econ Behav} \bibinfo{volume}{49}
  (\bibinfo{year}{2004}) \bibinfo{pages}{1--30}.
\bibitem[{Lorincz et~al.(2012)Lorincz, Capone, and
  Begusic}]{Lorincz-2012-Impact}
\bibinfo{author}{J.~Lorincz}, \bibinfo{author}{A.~Capone},
  \bibinfo{author}{D.~Begusic},
\newblock \bibinfo{title}{Impact of service rates and base station switching
  granularity on energy consumption of cellular networks},
\newblock \bibinfo{journal}{EURASIP J Wireless Comm Networking}
  \bibinfo{volume}{342} (\bibinfo{year}{2012}) \bibinfo{pages}{1--24}.
\bibitem[{Boiardi et~al.(2013)Boiardi, Capone, and
  Sans{\'o}}]{Boiardi-2013-Radio}
\bibinfo{author}{S.~Boiardi}, \bibinfo{author}{A.~Capone},
  \bibinfo{author}{B.~Sans{\'o}},
\newblock \bibinfo{title}{Radio planning of energy-aware cellular networks},
\newblock \bibinfo{journal}{Comput Network} \bibinfo{volume}{57}
  (\bibinfo{year}{2013}) \bibinfo{pages}{2564--2577}.
\bibitem[{CPL(2014)}]{CPLEX}
\bibinfo{title}{{IBM ILOG CPLEX Optimizer}}, \bibinfo{howpublished}{Online:
  \url{http://www.ibm.com/software/integration/optimization/cplex-optimizer}},
  \bibinfo{year}{2014}.
\bibitem[{Son et~al.(2011{\natexlab{a}})Son, Oh, and
  Krishnamachari}]{ANRG_Data}
\bibinfo{author}{K.~Son}, \bibinfo{author}{E.~Oh},
  \bibinfo{author}{B.~Krishnamachari}, \bibinfo{title}{Normalized cellular
  traffic during one week}, \bibinfo{howpublished}{Online:
  \url{http://anrg.usc.edu/downloads}}, \bibinfo{year}{2011}{\natexlab{a}}.
  \bibinfo{note}{Dataset No. 18}.
\bibitem[{Son et~al.(2011{\natexlab{b}})Son, Kim, Yi, and
  Krishnamachari}]{Son-2011-Base}
\bibinfo{author}{K.~Son}, \bibinfo{author}{H.~Kim}, \bibinfo{author}{Y.~Yi},
  \bibinfo{author}{B.~Krishnamachari},
\newblock \bibinfo{title}{Base station operation and user association
  mechanisms for energy-delay tradeoffs in green cellular networks},
\newblock \bibinfo{journal}{IEEE J Sel Area Comm} \bibinfo{volume}{29}
  (\bibinfo{year}{2011}{\natexlab{b}}) \bibinfo{pages}{1525--1536}.
\bibitem[{{U.S. Energy Information Administration}(2013)}]{EIA_EPM}
\bibinfo{author}{{U.S. Energy Information Administration}},
  \bibinfo{title}{Average retail price of electricity to ultimate customers by
  end-use sector}, \bibinfo{howpublished}{Online:
  \url{http://www.eia.gov/electricity}}, \bibinfo{year}{2013}.
\bibitem[{{Verizon Wireless}(2014)}]{Verizon_ShareEverything}
\bibinfo{author}{{Verizon Wireless}}, \bibinfo{title}{{Share Everthing} plan},
  \bibinfo{howpublished}{Online:
  \url{http://www.verizonwireless.com/wcms/consumer/shop/share-everything.html}},
  \bibinfo{year}{2014}.
\bibitem[{{Ajmone Marsan} et~al.(2009){Ajmone Marsan}, Chiaraviglio, Ciullo,
  and Meo}]{AjmoneMarsan-2009-Optimal}
\bibinfo{author}{M.~{Ajmone Marsan}}, \bibinfo{author}{L.~Chiaraviglio},
  \bibinfo{author}{D.~Ciullo}, \bibinfo{author}{M.~Meo},
\newblock \bibinfo{title}{Optimal energy savings in cellular access networks},
\newblock in: \bibinfo{booktitle}{Proc. of the IEEE Int. Conference on
  Communications Workshops (ICC)}, \bibinfo{year}{2009}, pp.
  \bibinfo{pages}{1--5}.
\bibitem[{Ghazzai et~al.(2012)Ghazzai, Yaacoub, Alouini, and
  Abu-Dayya}]{Ghazzai-2012-Optimized}
\bibinfo{author}{H.~Ghazzai}, \bibinfo{author}{E.~Yaacoub},
  \bibinfo{author}{M.~Alouini}, \bibinfo{author}{A.~Abu-Dayya},
\newblock \bibinfo{title}{Optimized green operation of {LTE} networks in the
  presence of multiple electricity providers},
\newblock in: \bibinfo{booktitle}{Proc. of the IEEE Globecom Workshops (GC
  Wkshps)}, \bibinfo{year}{2012}, pp. \bibinfo{pages}{664--669}.
\bibitem[{Han et~al.(2013)Han, Z., and {Ray Liu}}]{Han-2013-Energy}
\bibinfo{author}{F.~Han}, \bibinfo{author}{Z.~S. Z.}, \bibinfo{author}{K.~{Ray
  Liu}},
\newblock \bibinfo{title}{Energy-efficient base-station cooperative operation
  with guaranteed {QoS}},
\newblock \bibinfo{journal}{IEEE Trans Comm} \bibinfo{volume}{61}
  (\bibinfo{year}{2013}) \bibinfo{pages}{3505--3517}.
\bibitem[{Hasan et~al.(2013)Hasan, Altman, and Gorce}]{Hasan-2013-Coalitional}
\bibinfo{author}{C.~Hasan}, \bibinfo{author}{E.~Altman}, \bibinfo{author}{J.-M.
  Gorce},
\newblock \bibinfo{title}{The coalitional switch-off game of service
  providers},
\newblock in: \bibinfo{booktitle}{Proc. of the $9^\text{th}$ IEEE Int.
  Conference on Wireless and Mobile Computing, Networking and Communications
  (WiMob)}, \bibinfo{year}{2013}.
\bibitem[{Hsu et~al.(2013)Hsu, Chang, Chou, and Abichar}]{Hsu-2014-Optimizing}
\bibinfo{author}{C.-C. Hsu}, \bibinfo{author}{J.~M. Chang},
  \bibinfo{author}{Z.-T. Chou}, \bibinfo{author}{Z.~Abichar},
\newblock \bibinfo{title}{Optimizing spectrum-energy efficiency in downlink
  cellular networks},
\newblock \bibinfo{journal}{IEEE Trans Mobile Comput}  (\bibinfo{year}{2013}).
  \bibinfo{note}{In press}.
\bibitem[{Antoniou et~al.(2009)Antoniou, Koukoutsidis, Jaho, Pitsillides, and
  Stavrakakis}]{Antoniou-2009-Access}
\bibinfo{author}{J.~Antoniou}, \bibinfo{author}{I.~Koukoutsidis},
  \bibinfo{author}{E.~Jaho}, \bibinfo{author}{A.~Pitsillides},
  \bibinfo{author}{I.~Stavrakakis},
\newblock \bibinfo{title}{Access network synthesis game in next generation
  networks},
\newblock \bibinfo{journal}{Comput Network} \bibinfo{volume}{53}
  (\bibinfo{year}{2009}) \bibinfo{pages}{2716--2726}.
\bibitem[{Yaacoub et~al.(2013)Yaacoub, Imran, Dawy, and
  Abu-Dayya}]{Yaacoub-2013-Game}
\bibinfo{author}{E.~Yaacoub}, \bibinfo{author}{A.~Imran},
  \bibinfo{author}{Z.~Dawy}, \bibinfo{author}{A.~Abu-Dayya},
\newblock \bibinfo{title}{A game theoretic framework for energy efficient
  deployment and operation of heterogeneous {LTE} networks},
\newblock in: \bibinfo{booktitle}{Proc. of the {$18^\text{th}$} IEEE Int.
  Workshop on Computer Aided Modeling and Design of Communication Links and
  Networks (CAMAD)}, \bibinfo{year}{2013}, pp. \bibinfo{pages}{33--37}.
\bibitem[{Zhu et~al.(2013)Zhu, Hossain, and Niyato}]{Zhu-2014-Pricing}
\bibinfo{author}{K.~Zhu}, \bibinfo{author}{E.~Hossain},
  \bibinfo{author}{D.~Niyato},
\newblock \bibinfo{title}{Pricing, spectrum sharing, and service selection in
  two-tier small cell networks: A hierarchical dynamic game approach},
\newblock \bibinfo{journal}{IEEE Trans Mobile Comput}  (\bibinfo{year}{2013}).
  \bibinfo{note}{In press}.

\end{thebibliography}

\end{document}